\theoremstyle{definition}
\newtheorem{theorem}{Theorem}
\newtheorem{lemma}{Lemma}
\newtheorem{definition}{Definition}
\newtheorem{proposition}{Proposition}
\newtheorem{remark}{Remark}
\theoremstyle{definition}
\begin{document}

\title{Accelerated Algorithms for Convex and Non-Convex  Optimization on Manifolds}
\author{ Lizhen Lin$ ^{1} $, Bayan Saparbayeva$^{2} $, Michael Minyi Zhang$^{3} $,\\ 
	David B. Dunson$^{4} $ }
\date{\texttt{lizhen.lin@nd.edu, bayan\_saparbayeva@urmc.rochester.edu, mzhang18@hku.hk, dunson@duke.edu}\\
$^1$ Department of Applied and Computational Mathematics and Statistics, University of Notre Dame, Notre Dame, IN, USA.\\ 
$^2$ Department of Biostatistics and Computational Biology, University of Rochester, Rochester, NY, USA.\\
$^3$ Department of Statistics and Actuarial Science, University of Hong Kong, Hong Kong, China.\\
$^4$ Department of Statistical Science, Duke University, Durham, NC, USA.\\
\today}

\maketitle

\begin{abstract}
We propose a general scheme for solving convex and non-convex optimization problems on manifolds.  The central idea is that, by adding a multiple of the squared retraction distance to the objective function in question,  we ``convexify'' the  objective function and solve a series of convex sub-problems in the optimization procedure. One of the key challenges for optimization on manifolds is the difficulty of verifying the complexity of the objective function, e.g., whether the objective function is convex or non-convex, and the degree of non-convexity. Our proposed algorithm adapts to the level of complexity in the objective function. We show that when the objective function is convex, the algorithm provably converges to the optimum and leads to accelerated convergence. When the objective function is non-convex, the algorithm will converge to a stationary point. Our proposed method unifies insights from Nesterov's original idea for accelerating gradient descent algorithms with recent developments in optimization algorithms in Euclidean space. We demonstrate the utility of our algorithms on several manifold optimization tasks such as estimating intrinsic and extrinsic Fr\'echet means on spheres and low-rank matrix factorization with Grassmann manifolds applied to the Netflix rating data set. 
\end{abstract}

\section{Introduction}
\label{sec-intro}
Optimization is a near ubiquitous tool used in a wide-range of disciplines including the physical sciences, applied mathematics, engineering and the social sciences. Formally, it aims to maximize or minimize some quantitative criteria, namely, the objective function with respect to some parameters of interest. 
In many broad, complex learning %or decision making problems arising broadly 
in modern data science, the parameters are naturally defined over %or constrained 
to be on a \emph{manifold}. The emerging field of statistics on manifolds based on Fr\'echet means \citep{rabibook, linclt} can be viewed as one of the notable examples of optimization on general manifolds.

Another example can be found in building scalable recommender systems where extracting a low-rank matrix involves an optimization problem over a Grassmann manifold \citep{boumal}. Recent development in geometric deep learning, where the input or output layer constrained to be on a Riemannian manifold \citep{Lohit2017LearningIR, Huang2017ARN, Huang2017DeepLO}, constitutes another important class of applications. Other applications arise in diverse areas ranging from medical imaging analysis, Procrustes shape matching, dimension reduction, dynamic subspace tracking, and cases involving ranking and orthogonality constraints--among many others. This proliferation of manifold-valued applications demands fundamental development of models, algorithms and theory for solving optimization problems over non-Euclidean spaces.
%One particularly successful example can be found in building scalable recommendation systems where learning a low-rank matrix factorization involves an optimization problem over a Grassmann manifold. 

% for threat detection in national security and fast response to complex events such as `cyber attacks', recommending actions or filtering important intelligence information, in particular when the input data is very big;  and (2) applications of artificial  intelligence by developing geometric deep learning techniques, at the heart  of which is an optimization problem over manifolds.
%This on one hand is motivated by the increasing prevalence of manifold-valued data in medical imaging, machine vision, and so on, as well as the explosive development for the field of big data analysi in which the 

The current literature on optimization over manifolds mainly focuses on extending existing Euclidean space algorithms, such as  Newton's method \citep{2014arXiv1407.5965S, ring12}, conjugate gradient descent \citep{Edelman1998, Nishimori2008}, steepest descent \citep{steepest}, trust-region methods \citep{Absil2007, NIPS2011_4402} and others.  Many of the objective functions in manifold optimization problems are very complex. One of key challenges for solving such problems lies in the difficulty in verifying the convexity and the degree of convexity of the objective function.
%complexity of the objective function, e.g., whether the objective function is convex, or non-convex and the degree of non-convexity.
%Current approaches lack the ability to accommodate the complexity of the optimizations. 
Current approaches cannot adapt to the complexity of the problem at hand in manifold spaces. 
%,  which is a major hurdle for optimization on manifolds as it is often challenging to verify  convexity or non-convexity of a complex objective function.  
%However, a critical gap exists in developing algorithms for non-convex optimizations on manifolds. 

We take a major step to address these issues by proposing a general scheme to solve convex and non-convex optimization problems on manifolds using gradient-based algorithms originally designed for convex functions.
%by leveraging some recent developments for optimization problems in the Euclidean space \citep{pmlr-v84-paquette18a}.  
 The key idea is to ``convexify'' the objective function by adding a multiple of the squared retraction distance.  The proposed algorithm does not require knowledge of whether the objective function is convex but will automatically converges to an optima if the function is strongly convex. When the objective is non-convex, it achieves rapid convergence to a stationary point. The proposed algorithm is a generalization of Nesterov acceleration \citep{Nesterov2004}, which improves the convergence rate of gradient descent algorithms. Our algorithm (which we call $\mathcal{A}_2$) takes any general existing optimization method (which we call $\mathcal A$), originally designed for convex functions, and converts it into a method applicable for non-convex functions. % that applies to non-convex functions. 

Similar schemes have been explored for optimization problems in Euclidean space \citep{pmlr-v84-paquette18a}. Generalizations to arbitrary manifolds, however, require fundamentally novel theoretical development.  In the Euclidean case, the gradient steps are taken towards lines, whereas for the manifold we use the retraction curves which crucially
 affects the result and raises the difficulty in proving convergence. Also for manifolds, it is not trivial to correctly  convexify the 'weakly-convex function', a broad class of non-convex functions on manifolds  we consider which  account for most of the interesting examples of non-convex functions in machine learning. We propose a novel idea to convexify the objective locally with the help of the retraction. Key features of our algorithm include adaptation to the unknown weak convexity of the objective function  and automatic Nesterov acceleration. The proposed algorithm can be used to accelerate a broad class of $\mathcal A$ algorithms including gradient descent as well as parallel optimization approaches \citep[see][]{lizhennips2018}.

Our paper is organized as follows: In Section~\ref{sec:related}, we introduce related work on accelerated optimization algorithms. Next, we present our proposed acceleration algorithm on manifolds in Section~\ref{sec: proposed} and present theoretical convergence results. In Section~\ref{sec-simu}, we consider a simulation study of estimating Fr\'echet means and a real data example using the Netflix prize data set in a matrix completion problem.

\section{Related work}
\label{sec:related}

\cite{NIPS2017_7072} propose accelerated first-order methods for \emph{geodesically convex optimization} on Riemannian manifolds. This is a direct generalization of Nesterov's original linear extrapolation mechanism to general Riemannian manifolds via a non-linear operator. One drawback of \cite{NIPS2017_7072} is that the accelerated step of their algorithm  involves exact solving of non-trivial implicit equations.
% (see their equations 4 and 5).
% which requires computing  parallel transport on the manifold, a  computational-expensive task for many non-flat manifolds. 

\cite{pmlr-v75-zhang18a} later proposed a computationally tractable accelerated gradient algorithm and a novel estimation sequence for convergence analysis. Our approach is fundamentally different from theirs. We regularize an objective function with a squared retraction distance (see Proposition~\ref{prop1}), solve a sequence of convex subproblems, adapt to the degree of weak convexity of the objective function, and produce accelerated rates for convex objectives. 
%One of the fundamental difference between their method and  ours is that that our algorithms can handle both convex and non-convex objective functions, with convergence guarantee. 
Even in the convex case, our approach can deal with a much broader class of retraction-based convex functions. %In practice, it is highly non-trivial task to derive the degree of non-convexity of a function on manifold. In addition, even for the case of convex objective functions, our methods provides acceleration for general retraction-based convex functions, which is a much broader class than the geodesic-based counter parts. 

\cite{pmlr-v84-paquette18a} proposes a general scheme called ``Catalyst acceleration'' for solving general optimizations  in  Euclidean space, which has inspired development of  some ideas for our work.  Similar ideas have been explored  for convex functions in Euclidean space in both theory and practice \citep{Lin:2017}. However, optimization problems on manifolds are of fundamentally different nature and require development of  substantially new tools and theory. 

There is an interesting line of work on proposing fast  algorithms for stochastic optimization on manifolds \citep[see][]{Zhang:2016:RSF, 2018arXiv181104194Z, pmlr-v89-zhou19a, 6487381} which %in general
employ very different techniques such as minibatching, variance reduction and  utilizing the uncertainty of inputs. Methods like \cite{zhang2018towards} propose optimization methods that are analogous to Nesterov-type algorithms for manifold spaces.

\section{Accelerated algorithms for  optimization on manifolds}
\label{sec: proposed}
\subsection{Weakly convex functions on  manifolds with respect to retraction mapping}

We first define general retraction-based, weakly convex, convex, and strongly convex functions by generalizing from their geodesic-based counterparts . We then prove an important proposition 
that can transform a non-convex function into a convex one simply by adding a multiple of the squared retraction-distance to the objective function. 

\begin{definition}
A {\it retraction} on a manifold $\mathcal{M}$ is a smooth mapping from its tangent bundle $\mathcal{R}: T\mathcal{M}\rightarrow\mathcal{M}$
with the following properties:
\begin{enumerate}
    \item $\mathcal{R}_{\theta}(0_{\theta})=\mathcal{R}(\theta, 0_{\theta})=\theta,$ where $0_{\theta}$ denotes the zero vector on the tangent space $T_{\theta}\mathcal{M};$
    \item For any point $\theta\in\mathcal R$ the differential $d(\mathcal{R}_{\theta})$ of the retraction mapping at the zero vector $0_{\theta}\in T_{\theta}\mathcal M$ has to be equal to  the identity mapping on $T_{\theta}\mathcal{M},$ that is $d(\mathcal{R}_{\theta}(0_{\theta}))=d\big(\mathcal{R}(\theta, 0_{\theta})\big)={\rm id}_{T_{\theta}\mathcal{M}},$ where ${\rm id}_{T_{\theta}\mathcal{M}}$ denotes the identity mapping on $T_{\theta}\mathcal{M}.$ 
\end{enumerate}
The exponential map on a Riemannian manifold can be viewed as a special case of the retraction map, and the inverse-exponential map is a special case of the inverse-retraction map. A good choice of retraction map can lead to substantial reduction in computation burden compared to the exponential map. We see an example in Section~\ref{subsec:netflix} on the choice of a retraction map for Grassmannian; Figure~\ref{fig-ret} provides a visualization of a retraction map. 
\begin{figure}
\center
\includegraphics[width=.5\linewidth]{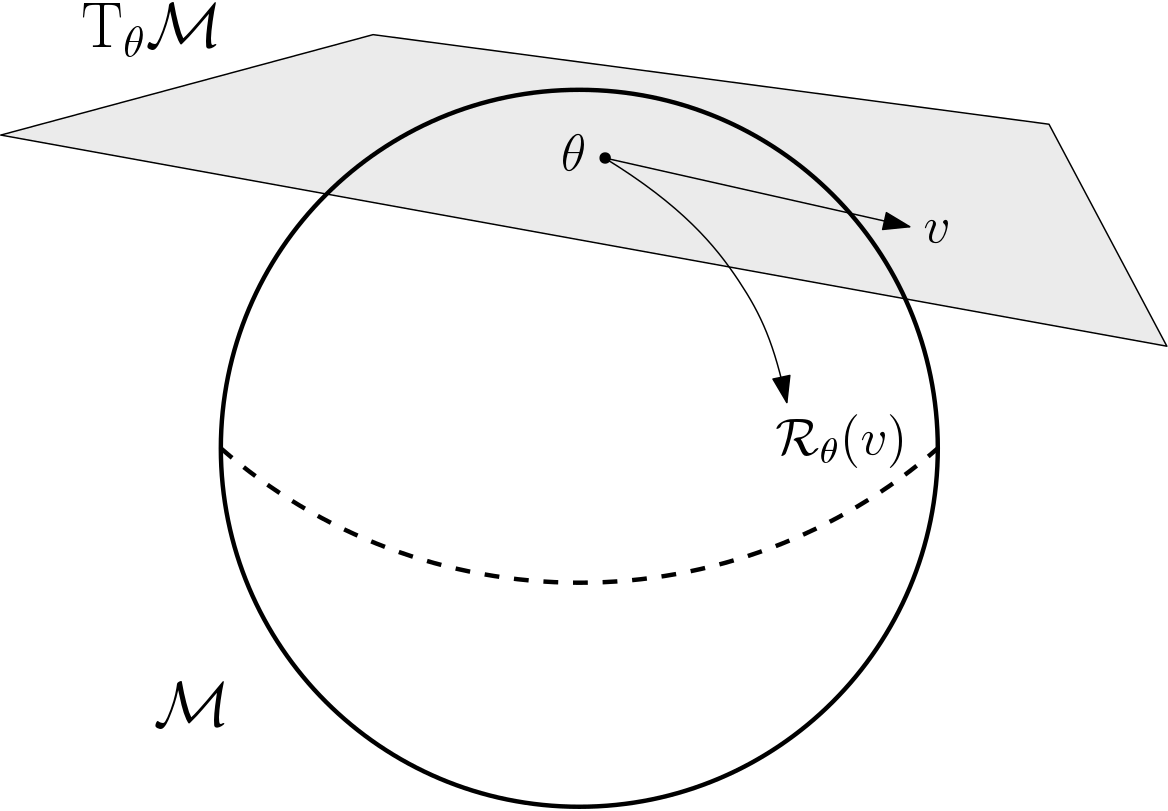}
\caption{Illustration of a retraction map on a manifold}
\label{fig-ret}
\end{figure}
\end{definition}
%We require the following symmetry properties (generalizing Riemannian symmetric space):
%\begin{enumerate}
  %  \item For any $\theta_1, \theta_2\in\mathcal{M},$ curves $\mathcal{R}_{\theta_1}\left(t_1\mathcal{R}_{\theta_1}^{-1}\theta_2\right)$ and $\mathcal{R}_{\theta_2}\left(t_2\mathcal{R}_{\theta_2}^{-1}\theta_1\right),$ where $t_1, t_2\in[0, 1],$ must coincide;
    %\item For any $\theta_1, \theta_2\in\mathcal{M}$ the equality 
    %\begin{align*}
   % $
     %   d_{\mathcal{R}}(\theta_1, \theta_2)=\|\mathcal{R}_{\theta_1}^{-1}\theta_2\|=\|\mathcal{R}_{\theta_2}^{-1}\theta_1\|
    %$
   %holds, where  $d_{\mathcal{R}}(\cdot, \cdot)$ denotes the retraction distance on $\mathcal M$.
%\end{enumerate}

We first define the retraction distance function on $\mathcal M$
\begin{equation*}
 d_{\mathcal{R}}(\theta_0, \theta)=\|\mathcal{R}_{\theta_0}^{-1}\theta\|.
 \end{equation*}

Since at  zero the differential of the retraction map is the identity, there is a small enough neighborhood  $D$ of the point $\theta$ where the inverse retraction map $\mathcal{R}_{\theta}^{-1}$ is bi-Lipschitz continuous in $D,$ i.e. $d_{\mathcal{R}}$  satisfies inequalities
\begin{equation*}
    \frac{1}{K_1}d_{\mathcal{R}}(\vartheta_1, \vartheta_2)\leq\|\mathcal{R}_{\theta}^{-1}\vartheta_1-\mathcal{R}_{\theta}^{-1}\vartheta_2\|\leq K_2d_{\mathcal{R}}(\vartheta_1, \vartheta_2),
\end{equation*}
where $\vartheta_1, \vartheta_2\in D,$ and $K_1\geq1,$ and $K_2\geq1.$

In addition, we also require the squared retraction distance function to be {\it $2R_1$-strongly retraction convex} around $\vartheta$--that is, for some $\delta>0$ and constant $0\leq R_1\leq1$ the following inequality holds:
\begin{equation}\label{dR-strongly-convex-2}
    d_{\mathcal{R}}^2(\theta_2, \vartheta)\geq d_{\mathcal{R}}^{2}(\theta_1, \vartheta)+\langle\nabla d_{\mathcal{R}}^2(\theta_1, \vartheta), \mathcal{R}_{\theta_1}^{-1}\theta_2\rangle+R_1d_{\mathcal{R}}^2(\theta_1, \theta_2),
\end{equation}
where $d_{\mathcal{R}}(\theta_i, \vartheta)<\delta,$ $i=1,2.$  Due to the fact that at the zero vector $0_{\vartheta}\in T_{\vartheta}\mathcal{M}$ the differential of $\mathcal{R}_{\vartheta}$  is equal to identity mapping, we can see that in a small neighborhood of $\vartheta$,  the square retraction distance function behaves like the square normal function which is strongly convex.

\begin{definition}
Consider a function $f:\mathcal{M}\rightarrow\bar{\mathbb{R}}$ and a point $\theta$ with $f(\theta)$ finite. The {\it $\mathcal{R}-$subdifferential} of $f$ at $\theta$ is the set
\begin{equation*}
\begin{split}
    \partial f(\theta)=\Big\{v\in T_{\theta}\mathcal{M}: f(\vartheta)\geq f(\theta)+\langle v, \mathcal{R}^{-1}_{\theta}\vartheta\rangle+o\big(d_{\mathcal{R}}(\theta, \vartheta)\big)\\
     \forall \vartheta\in \mathcal{M}\Big\}.
    \end{split}
\end{equation*}
\end{definition}

We now define the notion of convex functions on manifolds with respect to the retraction map. 
\begin{definition}
A function $f$ is {\it convex with respect to the retraction} $\mathcal{R}$ if for any points $\theta_1, \theta_2\in\mathcal{M}$ the inequality holds
\begin{equation}
\label{f-convex-2}
    f(\theta_2)\geq f(\theta_1)+\langle v, \mathcal{R}_{\theta_1}^{-1}\theta_2\rangle, \qquad  v\in\partial f(\theta_1).
    %-\frac{\rho}{2}d_{\mathcal{R}}^2(\theta_1, \theta_2), 
\end{equation}
\end{definition}

Now we are ready to define one of the most important classes of non-convex functions called  \emph{weakly-convex functions} which constitute many interesting applications of non-convex functions in machine learning. 

\begin{definition}
A function $f$ is {\it $\rho$-weakly convex with respect to the retraction} $\mathcal{R}$ if for any points $\theta_1, \theta_2\in\mathcal{M}$ the inequality holds
\begin{equation}
\label{f-weakly-convex-2}
    f(\theta_2)\geq f(\theta_1)+\langle v, \mathcal{R}_{\theta_1}^{-1}\theta_2\rangle-\frac{\rho}{2}d_{\mathcal{R}}^2(\theta_1, \theta_2), \qquad  v\in\partial f(\theta_1).
\end{equation}
\end{definition}

Given the strong retraction convexity of the squared retraction distance (see \eqref{dR-strongly-convex-2}),  we can regularize the weakly convex function $f$ by adding the term $\frac{\kappa}{2}d_{\mathcal{R}}^2(\theta, \vartheta)$ and turn it into a convex function  through the following proposition. 
%We first prove the following important proposition which says for any $\rho$-weakly convex function, one can add a multiple of the squared distance to make it convex, which allows one to turn a non-convex optimization into a convex one. 
\begin{proposition}
\label{prop1}
Let $d_{\mathcal{R}}$  be a retraction distance that is  strong-retraction convex or  satisfies the inequality \eqref{dR-strongly-convex-2} in the subset $D\subset\mathcal{M}.$ 
Then the function $f$ is $R_1\kappa$-weakly convex in $D$ if and only if the function
     \begin{equation*}h_{\kappa}(\theta, \vartheta)=f(\theta)+\frac{\kappa}{2} d_{\mathcal{R}}^2(\theta, \vartheta)\end{equation*}
is convex in $D.$
\end{proposition}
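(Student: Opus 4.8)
The plan is to prove both implications simultaneously by isolating a single algebraic cancellation: the weak-convexity slack $\tfrac{R_1\kappa}{2}d_{\mathcal{R}}^2(\theta_1,\theta_2)$ in \eqref{f-weakly-convex-2} is exactly offset by $\tfrac{\kappa}{2}$ times the strong-convexity surplus $R_1 d_{\mathcal{R}}^2(\theta_1,\theta_2)$ supplied by \eqref{dR-strongly-convex-2}. Throughout I would fix $D$ small enough that \eqref{dR-strongly-convex-2} holds for all points involved, that $\mathcal{R}_{\theta_1}^{-1}\theta_2$ is well defined for $\theta_1,\theta_2\in D$ (the bi-Lipschitz neighborhood), and that $\vartheta\in D$, so that $\theta\mapsto d_{\mathcal{R}}^2(\theta,\vartheta)$ is smooth on $D$ with Riemannian gradient $\nabla d_{\mathcal{R}}^2(\cdot,\vartheta)\in T_{(\cdot)}\mathcal{M}$.

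First I would establish the one genuinely structural ingredient: a sum rule for the $\mathcal{R}$-subdifferential under a smooth perturbation, i.e. $\partial h_\kappa(\theta_1,\vartheta)=\partial f(\theta_1)+\tfrac{\kappa}{2}\nabla d_{\mathcal{R}}^2(\theta_1,\vartheta)$ (subdifferential in the first argument) for every $\theta_1\in D$. The inclusion ``$\supseteq$'' is immediate: given $v\in\partial f(\theta_1)$, adding the first-order expansion $\tfrac{\kappa}{2}d_{\mathcal{R}}^2(\vartheta',\vartheta)=\tfrac{\kappa}{2}d_{\mathcal{R}}^2(\theta_1,\vartheta)+\tfrac{\kappa}{2}\langle\nabla d_{\mathcal{R}}^2(\theta_1,\vartheta),\mathcal{R}_{\theta_1}^{-1}\vartheta'\rangle+o\big(d_{\mathcal{R}}(\theta_1,\vartheta')\big)$ to the subgradient inequality for $f$ shows $v+\tfrac{\kappa}{2}\nabla d_{\mathcal{R}}^2(\theta_1,\vartheta)\in\partial h_\kappa(\theta_1,\vartheta)$. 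The reverse inclusion ``$\subseteq$'' is the same computation applied to $f=h_\kappa-\tfrac{\kappa}{2}d_{\mathcal{R}}^2(\cdot,\vartheta)$, again using smoothness of the subtracted term; in particular the ``for all subgradients'' quantifier in \eqref{f-convex-2}--\eqref{f-weakly-convex-2} is preserved in both directions.

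Next, for ``$f$ is $R_1\kappa$-weakly convex $\Rightarrow$ $h_\kappa$ is convex'', I would take $\theta_1,\theta_2\in D$ and an arbitrary $w\in\partial h_\kappa(\theta_1,\vartheta)$, write $w=v+\tfrac{\kappa}{2}\nabla d_{\mathcal{R}}^2(\theta_1,\vartheta)$ with $v\in\partial f(\theta_1)$ via the sum rule, and add the weak-convexity inequality \eqref{f-weakly-convex-2} for $f$ (with parameter $\rho=R_1\kappa$) to $\tfrac{\kappa}{2}$ times \eqref{dR-strongly-convex-2}. The terms $-\tfrac{R_1\kappa}{2}d_{\mathcal{R}}^2(\theta_1,\theta_2)$ and $+\tfrac{\kappa R_1}{2}d_{\mathcal{R}}^2(\theta_1,\theta_2)$ cancel, leaving exactly $h_\kappa(\theta_2,\vartheta)\ge h_\kappa(\theta_1,\vartheta)+\langle w,\mathcal{R}_{\theta_1}^{-1}\theta_2\rangle$, which is \eqref{f-convex-2} for $h_\kappa$. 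For the converse I would run the same two lines in reverse: from convexity of $h_\kappa$, pick $v\in\partial f(\theta_1)$, set $w=v+\tfrac{\kappa}{2}\nabla d_{\mathcal{R}}^2(\theta_1,\vartheta)\in\partial h_\kappa(\theta_1,\vartheta)$, expand $h_\kappa$ on both sides, and invoke \eqref{dR-strongly-convex-2} in the form $d_{\mathcal{R}}^2(\theta_1,\vartheta)+\langle\nabla d_{\mathcal{R}}^2(\theta_1,\vartheta),\mathcal{R}_{\theta_1}^{-1}\theta_2\rangle-d_{\mathcal{R}}^2(\theta_2,\vartheta)\le -R_1 d_{\mathcal{R}}^2(\theta_1,\theta_2)$ to recover \eqref{f-weakly-convex-2} with $\rho=R_1\kappa$.

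The main obstacle is not the cancellation, which is elementary once the pieces are assembled, but the bookkeeping of \emph{where} each estimate is valid: \eqref{dR-strongly-convex-2}, the bi-Lipschitz bound, and the Taylor expansion of $d_{\mathcal{R}}^2(\cdot,\vartheta)$ all hold only in small neighborhoods, so I must argue that $D$ can be shrunk uniformly so that the inverse retraction, the expansion, and the strong-convexity inequality hold simultaneously for every pair $\theta_1,\theta_2\in D$. I would state this domain restriction explicitly at the outset and treat the $\mathcal{R}$-subdifferential sum rule as the single lemma worth recording separately; everything else is substitution and the sign-matching above.
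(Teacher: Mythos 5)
Your forward implication is sound and is essentially the paper's own argument: the cancellation of $-\tfrac{R_1\kappa}{2}d_{\mathcal{R}}^2(\theta_1,\theta_2)$ against $\tfrac{\kappa}{2}\cdot R_1 d_{\mathcal{R}}^2(\theta_1,\theta_2)$ from \eqref{dR-strongly-convex-2} is exactly what the paper does, and your explicit subdifferential sum rule $\partial h_{\kappa}(\cdot,\vartheta)=\partial f(\cdot)+\tfrac{\kappa}{2}\nabla d_{\mathcal{R}}^2(\cdot,\vartheta)$ only makes precise what the paper uses implicitly (it invokes the same identity in Lemma~\ref{lem-2}). Your attention to shrinking $D$ so that \eqref{dR-strongly-convex-2}, the bi-Lipschitz bound and the expansion hold simultaneously is also a reasonable tightening.

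The gap is in your converse. From convexity of $h_{\kappa}$ you correctly arrive at
\begin{equation*}
f(\theta_2)\;\geq\; f(\theta_1)+\langle v,\mathcal{R}_{\theta_1}^{-1}\theta_2\rangle+\frac{\kappa}{2}\Big(d_{\mathcal{R}}^2(\theta_1,\vartheta)+\langle\nabla d_{\mathcal{R}}^2(\theta_1,\vartheta),\mathcal{R}_{\theta_1}^{-1}\theta_2\rangle-d_{\mathcal{R}}^2(\theta_2,\vartheta)\Big),
\end{equation*}
and to conclude \eqref{f-weakly-convex-2} with $\rho=R_1\kappa$ you must bound the bracketed quantity \emph{from below} by $-R_1 d_{\mathcal{R}}^2(\theta_1,\theta_2)$. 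But \eqref{dR-strongly-convex-2} gives precisely the opposite: it says the bracket is $\leq -R_1 d_{\mathcal{R}}^2(\theta_1,\theta_2)$, which is the estimate you invoke, and an upper bound on a term you are adding to the right-hand side yields nothing. What the converse actually needs is a matching \emph{upper} quadratic (smoothness-type) bound, $d_{\mathcal{R}}^2(\theta_2,\vartheta)\leq d_{\mathcal{R}}^{2}(\theta_1,\vartheta)+\langle\nabla d_{\mathcal{R}}^2(\theta_1,\vartheta),\mathcal{R}_{\theta_1}^{-1}\theta_2\rangle+R_1 d_{\mathcal{R}}^2(\theta_1,\theta_2)$, with the same constant $R_1$; this is not among the hypotheses and does not follow from strong retraction convexity (it would be a separate local assumption, true with $R_1=1$ in the Euclidean model but requiring justification here). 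So "running the same two lines in reverse" does not work as stated. For what it is worth, the paper's own proof only establishes the direction weak convexity $\Rightarrow$ convexity of $h_{\kappa}$ and is silent on the converse, so your forward half reproduces everything the paper proves; but your claimed proof of the reverse implication, as written, has the inequality pointing the wrong way and would need the additional upper bound to be salvaged.
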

\begin{proof}
Let $f$ be $\rho$-weakly convex. Then for any $\theta_1, \theta_2\in D$ and any $\lambda\in[0, 1]$
\begin{align*} 
 f(\theta_2)&\geq f(\theta_1)+\langle\partial f(\theta_1), \mathcal{R}_{\theta_1}^{-1}\theta_2\rangle-\frac{R_1\kappa}{2}d_{\mathcal{R}}^2(\theta_1, \theta_2)\\
 &\geq f(\theta_1)+\langle\partial f(\theta_1), \mathcal{R}_{\theta_1}^{-1}\theta_2\rangle+ \frac{\kappa}{2}d_{\mathcal{R}}^{2}(\theta_1, \vartheta)\\
& \quad\quad +\langle\nabla \frac{\kappa}{2}d_{\mathcal{R}}^2(\theta_1, \vartheta), \mathcal{R}_{\theta_1}^{-1}\theta_2\rangle-\frac{\kappa}{2}d_{\mathcal{R}}^2(\theta_2, \vartheta),
\end{align*}
which implies
\begin{equation*}
    h_{\kappa}(\theta_2, \vartheta)\geq h_{\kappa}(\theta_1, \vartheta)+\langle\partial h_{\kappa}(\theta_1, \vartheta), \mathcal{R}_{\theta_1}^{-1}\theta_2\rangle.
\end{equation*}
\end{proof}
For functions defined on an Euclidean space we have a definition of a weakly convex function that is equivalent to \eqref{f-weakly-convex-2}:
\begin{equation}
\label{f-weakly-convex-0}
    f\big(\mathcal{R}_{\vartheta}(\lambda\mathcal{R}^{-1}_{\vartheta}\theta)\big)\leq\lambda f(\theta)+(1-\lambda)f(\vartheta)+\frac{\rho\lambda(1-\lambda)}{2}d_{\mathcal{R}}^2(\theta, \vartheta).
\end{equation}
Over the manifold,  however,  there is no such straightforward equivalence. This is due to the distance function $d_{\mathcal{R}}^2(\vartheta, \theta)$ which does not satisfy the following equality:
\begin{align}
\label{dR-equality}
     d_{\mathcal{R}}^2(\mathcal{R}_{\theta_1}\lambda\mathcal{R}_{\theta_1}^{-1}\theta_2, \vartheta)&\nonumber=\lambda d_{\mathcal{R}}(\theta_2, \vartheta)+(1-\lambda)d_{\mathcal{R}}^2(\theta_1, \vartheta)\\
     &\quad\quad-\lambda(1-\lambda)d_{\mathcal{R}}^2(\theta_1, \theta_2).
\end{align}
Nevertheless in some neighborhood of $\vartheta,$, for some $\delta>0$, the following inequality holds
\begin{equation}\label{dR-strongly-convex-1}
\begin{split}
   d_{\mathcal{R}}^2(\mathcal{R}_{\theta_1}\lambda\mathcal{R}_{\theta_1}^{-1}\theta_2, \vartheta)\leq\lambda d_{\mathcal{R}}^2(\theta_2, \vartheta) +(1-\lambda)d_{\mathcal{R}}^2(\theta_1, \vartheta)\\
   -\lambda(1-\lambda)R_1 d_{\mathcal{R}}^2(\theta_1, \theta_2),
   \end{split}
\end{equation}
where $d_{\mathcal{R}}(\theta_1, \vartheta)<\delta$ and $d_{\mathcal{R}}(\theta_2, \vartheta)<\delta$.  
%where $r_1\leq1.$

Therefore the function $f$ is {\it $\rho$-weakly convex with respect to the retraction} $\mathcal{R}$ if for any points $\theta, \vartheta\in\mathcal{M}$ such that $\lambda\in[0, 1],$ the approximate secant inequality holds
\begin{equation*}\label{f-weakly-convex-1}
    f\big(\mathcal{R}_{\vartheta}(\lambda\mathcal{R}^{-1}_{\vartheta}\theta)\big)\leq\lambda f(\theta)+(1-\lambda)f(\vartheta)+
    \frac{\rho\lambda(1-\lambda)}{2}d_{\mathcal{R}}^2(\theta, \vartheta),
    %\frac{r_1\rho\lambda(1-\lambda)}{2R_1}d_{\mathcal{R}}^2(\theta, \vartheta).
\end{equation*}
where $d_{\mathcal{R}}(\theta, \vartheta)<\delta.$

\subsection{The acceleration algorithm on manifolds}

In this section, we propose our acceleration algorithms for convex and non-convex functions on manifolds.
% based on the definitions and fact derived in the previous section. 
%In proposing  acceleration algorithm which we call $\mathcal{A}_2,$    
We first minimize the convex subproblem of an objective function $f$ for some existing approach $\mathcal{A}$ (such as a gradient descent algorithm) where the objective function is written as
\begin{equation*} 
    h_{*}(\vartheta)=\min_{\theta\in \mathcal{M}}\left\{f(\theta)+\frac{\kappa}{2}d_{\mathcal{R}}^2(\theta, \vartheta)\right\},
\end{equation*}
with a positive regularization parameter $\kappa$. Proposition \ref{prop1} ensures the convexity of the subproblem for an appropriate level of regularization.

Therefore, with an existing approach $\mathcal{A}$, we define the {\it proximal operator}
\begin{equation*}
    p(\vartheta)={\rm prox}_{f/\kappa}(\vartheta)=\arg\min_{\theta\in \mathcal{M}}\Big\{f(\theta)+\frac{\kappa}{2}d_{\mathcal{R}}^{2}(\theta, \vartheta)\Big\},
\end{equation*}
where $\vartheta$ is a {\it prox-center.}
%Unfortunately, there is no closed form solution to $p(\theta)$ , which therefore has to be  approximated  with  $\mathcal{A}.$ 

To consider optimizing $p(\vartheta)$, we focus on $\mathcal A$ having linear convergence rates. Specifically, a minimization algorithm $\mathcal{A},$ generating the sequence of iterates $(\theta_k)_{k\geq0},$ has a {\it linear convergence rate} if there exists $\tau_{\mathcal{A}, f}\in (0, 1)$ and a constant $C_{\mathcal{A}, f}\in\mathbb{R}$ such that
\begin{equation*}
  f(x_k)-f_{*}\leq C_{\mathcal{A}, f}(1-\tau_{\mathcal{A}, f})^k,
\end{equation*} 
where $f_{*}$ is the minimum value of $f.$

There are multiple optimization algorithms on manifolds  with linear convergence rates for strongly-convex functions on manifold. These include gradient descent, conjugate gradient descent, MASAGA \citep{Babanezhad2018MASAGAAL}, RSVRG \citep{Zhang:2016:RSF}, and many others.

For a proximal center $\vartheta$ and a smoothing parameter $\kappa$, we let
\begin{align*}
h_{\kappa}(\theta, \vartheta)=f(\theta)+\frac{\kappa}{2}d_{\mathcal{R}}^2(\theta, \vartheta).
\end{align*}

At the $k$-th iteration, given a previous iterate $\theta_{k-1}$ and the extrapolation term $\tilde{\vartheta}_{k-1},$ we perform the following steps:
\begin{enumerate}
    \item {\bf Proximal point step.}
    \begin{align*}\bar{\theta}_k\approx\arg\min_{\theta\in \mathcal{M}}h_{\kappa}(\theta, \theta_{k-1}).
    \end{align*}
    \item {\bf Accelerated proximal point step.}
    \begin{align*}\vartheta_k=\mathcal{R}_{\theta_{k-1}}\left(\alpha_k\mathcal{R}_{\theta_{k-1}}^{-1}\tilde{\vartheta}_{k-1}\right), &\qquad
        \tilde{\theta}_k\approx\arg\min_{\theta\in\mathcal{M}}h_{\kappa}(\theta, \vartheta_k), \\  \tilde{\vartheta}_k=\mathcal{R}_{\theta_{k-1}}\left(\frac{1}{\alpha_k}\mathcal{R}_{\theta_{k-1}}^{-1}\tilde{\theta}_k\right), &\qquad \frac{1-\alpha_{k+1}}{\alpha_{k+1}^2}=\frac{1}{\alpha_k^2}.
    \end{align*}
\end{enumerate}

One needs a stopping criterion, since we cannot use the functional gap  as a stopping criterion  here as in the convex case.  A stationarity stopping criterion is  adopted which consists of two conditions:
\begin{itemize}
    \item {\bf Descent condition} $h_{\kappa}(\theta, \vartheta) \leq h_{\kappa}(\vartheta, \vartheta);$
    \item {\bf Adaptive stationary condition}  ${\rm dist} \big(0_{\theta},\partial_{\theta} h_{\kappa}(\theta, \vartheta)\big) < \kappa d_{\mathcal{R}}(\theta, \vartheta).$  
    \end{itemize}
Here, ${\rm dist}(\cdot, \cdot)$ denotes the standard Euclidean distance on the tangent space.

Recall that a quadratic of the retraction distance  is added to $f$ to make the subproblem convex. So if the weak-convexity parameter $\rho$ is known, then one should set $\kappa>\rho$ to make the problem convex. In this case, it is proven that the number of inner calls to $\mathcal{A}$ for the subproblems 
\begin{equation}\label{SubProblem}
    \min_{\vartheta\in\mathcal M}h_{\kappa}(\vartheta, \theta)
\end{equation}
can be bounded by proper initialization point $\vartheta_0:$
\begin{itemize}
    \item if $f$ is smooth, then set $\vartheta_0=\theta;$
    \item if $f=f_0+\psi,$ where $f_0$ is $L$-smooth, then set $\vartheta_0={\rm prox}_{\eta\psi}\big(\mathcal{R}_{\theta}\left(\eta\nabla f_0(\theta)\right)\big)$ with $\eta\leq\frac{1}{L+\kappa}.$ 
\end{itemize}

However, in general one does not have  knowledge of $\rho.$ Thus we  propose a method that allows algorithm $\mathcal{A}_1$ (Algorithm~\ref{adapt}) to handle the convexity problem adaptively.
% inside $\mathcal{A}_2$ to handle this adaptively. 
%fix a number of iterations $T$ in advance. 

Our idea is to let $\mathcal{A}$ run on the subproblem for $T$ predefined iterations, output the point $\bar{\theta}_T,$ and check if a sufficient decrease occurs. If the subproblem is convex, then the aforementioned descent and adaptive stationary conditions are guaranteed. If either of the conditions are violated, then the subproblem is deemed non-convex. In this case, we double the value $\kappa$ and repeat the previous steps

 The tuning parameter $\kappa$ should be  chosen big enough to ensure the  convexity of the subproblems and simultaneously small enough to obtain the optimal complexity by not letting the subproblem deviate too far away from the original objective function. Thus we introduce $\kappa_{cvx}$ as an $\mathcal{A}$- {\it dependent smoothing parameter.} Notice that the linear convergence  rate $\tau_{\mathcal{A}, h_{\kappa}}$  of $\mathcal A$ is independent of the prox-center and varies with $\kappa$. We define $\kappa_{cvx}$ as
 
\begin{equation*}
    \kappa_{cvx}=\arg\max_{\kappa>0}\frac{\tau_{\mathcal{A}, h_{\kappa}}}{\sqrt{L+\kappa}}.
\end{equation*}
%Algorithm \ref{Catlalyst2}  or $\mathcal{A}_2$  relies on what we call the Adaptation algorithm  $\mathscr{Adapt}$ or Algorithm \ref{adapt} (see below equation \eqref{SubProblem}) to ensure the  smoothing parameter $\kappa$ and the number of iterations $T$  satisfy the descent and adaptive stationary condition above. 
%So for $\theta\in M,$  $\kappa$ and $T$ are chosen via Algorithm \ref{adapt}. 

\begin{algorithm}
	\caption{$\mathcal{A}_1$: The Adaptation Algorithm on Manifolds }
	\label{adapt}
	%Initialize $\theta_0=\bar{\theta};$\\
{\bf input} 
 the point $\theta\in \mathcal{M},$ the smoothing parameter $\kappa$ and the number of iterations $T$\\
	\Repeat { \rm $h_{\kappa}(\bar{\theta}_T, \theta)<h_{\kappa}(\theta, \theta)$ and ${\rm dist}(\partial h_{\kappa}(\theta_T, \theta), 0_{\theta_T})<\kappa d_{\mathcal{R}}(\theta_T, \theta)$}
	{Compute
	\begin{equation*}
    \bar{\theta}_{T}\approx\arg\min_{\vartheta\in\mathcal{M}} h_{\kappa}(\vartheta, \theta)
    \end{equation*}
    by running $T$ iterations of $\mathcal{A}$, using the initialization strategy described below Equation~\eqref{SubProblem}.\\
    {\bf If} $h_{\kappa}(\bar{\theta}_T, \theta)>h_{\kappa}(\theta, \theta)$ or ${\rm dist}(\partial h_{\kappa}(\theta_T, \theta), 0_{\theta_T})>\kappa d_{\mathcal{R}}(\theta_T, \theta)$\\
	{\bf then} go to repeat by replacing $\kappa$ with $2\kappa.$\\
	%{\bf else} go to output
	}
{\bf output} $(\theta_T, \kappa)$
\end{algorithm}

Finally, for an initial estimate $\theta_0\in \mathcal{M},$ smoothing parameters $\kappa_0, \kappa_{cvx},$  an optimization algorithm $\mathcal{A}$, and a stopping criterion based on 
%a sequence of accuracies $(\varepsilon_k)_{k\leq0},$ or  $(\delta_k)_{k\leq0},$ or 
a fixed budget $T$ and $S$, we have the following acceleration algorithm, $\mathcal A_2$, for the manifold (Algorithm \ref{Catlalyst2}).

\begin{algorithm}
	\caption{$\mathcal{A}_2$: Acceleration Algorithm on  Manifolds }
	\label{Catlalyst2}
	Initialize $\Tilde{\vartheta}_0=\theta_0,$ $\alpha=1.$\\
	\Repeat{\rm the stopping criterion is \ \ \ ${\rm dist}(\partial f(\bar{\theta}_k), 0_{\bar{\theta}})<\varepsilon$}
{for $k=1, 2, ...$
 \begin{enumerate}
\item compute $(\bar{\theta}_k, \kappa_k)=\mathcal{A}_1(\theta_{k-1}, \kappa_{k-1}, T)$
\item compute $\vartheta_k=\mathcal{R}_{\theta_{k-1}}\left(\alpha_k\mathcal{R}^{-1}_{\theta_{k-1}}\Tilde{\vartheta}_{k-1}\right)$ and apply 
      $S_k\log(k+1)$ iterations of $\mathcal{A}_1$ to find 
   \begin{equation*}
    \Tilde{\theta}_k\approx\arg\min_{\theta\in\mathcal{M}} h_{\kappa_{cvx}}(\theta, \vartheta_k),
   \end{equation*}
   by using initialization strategy described below \eqref{SubProblem}.
\item Update $\Tilde{\vartheta}_k$ and  $\alpha_{k+1}$:
 \begin{align*}
    \tilde{\vartheta}_k&=\mathcal{R}_{\theta_{k-1}}\left(\frac{1}{\alpha_{k}}\mathcal{R}^{-1}_{\theta_{k-1}}\Tilde{\theta}_k\right),\\
    \alpha_{k+1}&=\frac{\sqrt{\alpha_k^4+4\alpha_k^2}-\alpha_k^2}{2}.
 \end{align*}
\item Choose $\theta_k$ to be any point satisfying $f(\theta_k)=\min\{f(\bar{\theta}_k), f(\Tilde{\theta}_k)\}.$
\end{enumerate}
}
\end{algorithm}

\begin{remark}
Note that there are two sequences  $\left\{\tilde{\theta}_k\right\}$  and $\left\{\bar{\theta}_k\right\}$ in Algorithm $\mathcal A_2$. Since the extrapolation step is designed for the convex case, the second sequence $\{\tilde{\theta}_k\}$ approximates the optimal point with accelerated rate which means that it approaches the optimal point faster than the first sequence $\left\{\bar{\theta}_k\right\}$ above. Intuitively, when the first sequence is chosen it uses the initial algorithm $\mathcal A$ and adapts the smoothing parameter to our objective--implying that the Nesterov step failed to accelerate convergence.
%Intuitively, when the first sequence is chosen, which  simply uses the initial algorithm $\mathcal{A}$ and also adapts the smoothing parameter to our objective, it implies that Nesterov's step fails to accelerate convergence.
\end{remark}

% The tuning $\kappa$ should be  chosen big enough to ensure the strong convexity of the subproblems and simultaneously small enough to obtain the optimal complexity. Thus we introduce $\kappa_{cvx}$ as $\mathcal{A}$ {\it dependent smoothing parameter.} Notice that the linear convergence  rate $\tau_{\mathcal{A}, h_{\kappa}}$  of $\mathcal A$ is independent of the prox-center and varies with $\kappa,$ so define $\kappa_{cvx}$ as
%\begin{equation*}
%    \kappa_{cvx}=\arg\max_{\kappa>0}\frac{\tau_{\mathcal{A}, h_{\kappa}}}{\sqrt{L+\kappa}}.
%\end{equation*}

In the adaptation method $\mathcal{A}_1(\theta_{k-1}, \kappa_{k-1}, T)$, the resulting $\bar{\theta}_k$ and $\kappa_k$ have to satisfy the following inequalities
\begin{align}\label{C1}
    {\rm dist}\big(0_{\bar{\theta}_k}, \partial h(\bar{\theta}_k, \theta_{k-1})\big)&<\kappa_k d_{\mathcal{R}}(\bar{\theta}_k, \theta_{k-1}) \quad \text{and} \\
    \quad h_{\kappa_k}(\bar{\theta}_k, \theta_{k-1})&\leq h_{\kappa_k}(\theta_{k-1}, \theta_{k-1}).
\end{align}
The resulting $\Tilde{\theta}_k,$ %with a similar predefined iteration strategy, 
needs to satisfy the condition that if the function $f$ is convex, then
\begin{equation}
    \label{C2}
    {\rm dist}\big(0_{\Tilde{\theta}_k}, \partial h_{\kappa_{cvx}}(\Tilde{\theta}_k, \vartheta_k)\big)<\frac{\kappa_{cvx}}{k+1}d_{\mathcal{R}}(\Tilde{\theta}_k, \vartheta_k).
\end{equation}
We then have the following lemma:
%\begin{lemma}
%Suppose the sequence $\{\alpha_k\}_{k\geq1}$ is produced by $\mathcal{A}_2.$ Then, the following bounds hold for all $k\geq1$
%\begin{equation*}
%    \frac{\sqrt{2}}{k+2}\leq\alpha_k\leq\frac{2}{k+1}.
%\end{equation*}
%\end{lemma}
%
\begin{lemma}
\label{lem-2}
Suppose $\theta$ satisfies ${\rm dist}(0_{\theta}, \partial h_{\kappa}(\theta, \vartheta))<\varepsilon,$ and $|\nabla d_{\mathcal{R}}^{2}(\theta, \vartheta)|\leq Kd_{\mathcal{R}}(\theta, \vartheta)$, then the inequality holds:
\begin{equation*}
    {\rm dist}(0_{\theta}, \partial f(\theta))\leq\varepsilon+\kappa Kd_{\mathcal{R}}(\theta, \vartheta).
\end{equation*}
\end{lemma}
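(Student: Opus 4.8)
The claim is essentially a triangle-inequality statement at the level of subdifferentials: we know $h_\kappa(\theta,\vartheta) = f(\theta) + \tfrac{\kappa}{2}d_{\mathcal R}^2(\theta,\vartheta)$, so by the sum rule for the $\mathcal R$-subdifferential we should have $\partial_\theta h_\kappa(\theta,\vartheta) = \partial f(\theta) + \tfrac{\kappa}{2}\nabla_\theta d_{\mathcal R}^2(\theta,\vartheta)$ as subsets of $T_\theta\mathcal M$ (the second term is a genuine gradient since $d_{\mathcal R}^2(\cdot,\vartheta)$ is smooth near $\theta$ by the bi-Lipschitz/smoothness assumptions on the retraction). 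The plan is: first record this additivity, then pick a vector $w \in \partial_\theta h_\kappa(\theta,\vartheta)$ achieving (or nearly achieving) $\mathrm{dist}(0_\theta, \partial_\theta h_\kappa(\theta,\vartheta)) < \varepsilon$, write $w = v + \tfrac{\kappa}{2}\nabla_\theta d_{\mathcal R}^2(\theta,\vartheta)$ with $v \in \partial f(\theta)$, and then estimate
\[
\mathrm{dist}(0_\theta,\partial f(\theta)) \le \|v\| = \Big\| w - \tfrac{\kappa}{2}\nabla_\theta d_{\mathcal R}^2(\theta,\vartheta)\Big\| \le \|w\| + \tfrac{\kappa}{2}\big\|\nabla_\theta d_{\mathcal R}^2(\theta,\vartheta)\big\|.
\]
Applying $\|w\| < \varepsilon$ and the hypothesis $\|\nabla d_{\mathcal R}^2(\theta,\vartheta)\| \le K d_{\mathcal R}(\theta,\vartheta)$ gives $\mathrm{dist}(0_\theta,\partial f(\theta)) \le \varepsilon + \tfrac{\kappa}{2}K d_{\mathcal R}(\theta,\vartheta)$, which is even slightly stronger than the stated bound; I would just present it with the constant as written (absorbing the $\tfrac12$ into $K$, consistent with how $K$ appears elsewhere), or note that the $\tfrac12$ can be dropped by rescaling.

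The one nontrivial point, and the step I expect to be the main obstacle, is justifying the sum rule $\partial_\theta h_\kappa = \partial f + \tfrac{\kappa}{2}\nabla_\theta d_{\mathcal R}^2$ for the retraction subdifferential defined in the paper. In Euclidean nonsmooth analysis the inclusion $\partial(f+g) \supseteq \partial f + \partial g$ is automatic, while equality requires $g$ to be smooth (strict differentiability). Here the relevant direction is $\subseteq$: given $w \in \partial_\theta h_\kappa(\theta,\vartheta)$, I need to produce $v = w - \tfrac{\kappa}{2}\nabla_\theta d_{\mathcal R}^2(\theta,\vartheta) \in \partial f(\theta)$. This follows because $d_{\mathcal R}^2(\cdot,\vartheta)$ is smooth at $\theta$, so for all $\vartheta' \in \mathcal M$ one has $\tfrac{\kappa}{2}d_{\mathcal R}^2(\vartheta',\vartheta) = \tfrac{\kappa}{2}d_{\mathcal R}^2(\theta,\vartheta) + \langle \tfrac{\kappa}{2}\nabla_\theta d_{\mathcal R}^2(\theta,\vartheta), \mathcal R_\theta^{-1}\vartheta'\rangle + o(d_{\mathcal R}(\theta,\vartheta'))$; subtracting this expansion from the defining inequality of $w \in \partial_\theta h_\kappa(\theta,\vartheta)$ and using $f(\vartheta') = h_\kappa(\vartheta',\vartheta) - \tfrac{\kappa}{2}d_{\mathcal R}^2(\vartheta',\vartheta)$ yields exactly the defining inequality for $v \in \partial f(\theta)$, with the $o$-terms combining. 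The smoothness of $d_{\mathcal R}^2(\cdot,\vartheta)$ and the first-order Taylor expansion in the chart given by $\mathcal R_\theta^{-1}$ are what make this work, and these are available from the standing assumptions (retraction is smooth, $\mathcal R_\theta^{-1}$ bi-Lipschitz near $\theta$, differential of $\mathcal R_\theta$ at $0$ is the identity).

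So concretely the steps, in order, are: (1) state that $d_{\mathcal R}^2(\cdot,\vartheta)$ is smooth at $\theta$ and write its first-order $\mathcal R$-expansion at $\theta$; (2) prove the subdifferential sum rule inclusion $\partial_\theta h_\kappa(\theta,\vartheta) \subseteq \partial f(\theta) + \tfrac{\kappa}{2}\nabla_\theta d_{\mathcal R}^2(\theta,\vartheta)$ via that expansion; (3) take a near-minimizing $w$ in $\partial_\theta h_\kappa(\theta,\vartheta)$ with $\|w\| < \varepsilon$, extract $v \in \partial f(\theta)$, and bound $\|v\|$ by the triangle inequality; (4) invoke the gradient bound hypothesis to conclude. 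Steps (1), (3), (4) are routine; step (2) is the only place where care with the manifold definitions is needed, and I would write it out explicitly since the retraction subdifferential is nonstandard. A minor caveat to flag: the bound is vacuous unless $\partial_\theta h_\kappa(\theta,\vartheta)$ is nonempty, which holds automatically at points where the descent/stationarity conditions of the algorithm are checked (indeed $h_\kappa(\cdot,\vartheta)$ is convex near $\theta$ by Proposition~\ref{prop1} when $\kappa$ is large enough, guaranteeing a nonempty subdifferential).
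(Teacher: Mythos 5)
Your proposal is correct and takes essentially the same route as the paper: the paper's proof simply picks $v\in\partial h_{\kappa}(\theta,\vartheta)$ with $\|v\|\leq\varepsilon$, asserts the sum rule $\partial h_{\kappa}(\theta, \vartheta)=\partial f(\theta)+\kappa\nabla d_{\mathcal{R}}^{2}(\theta, \vartheta)$, and concludes by the triangle inequality together with the gradient bound. Your write-up merely supplies the justification of that sum rule (which the paper leaves implicit) and correctly notes that a careful accounting of the factor $\tfrac{1}{2}$ in $\tfrac{\kappa}{2}d_{\mathcal{R}}^{2}$ yields the slightly stronger constant $\tfrac{\kappa}{2}K$, which the paper absorbs into the stated bound.
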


\begin{proof}
We can find $v\in\partial h_{\kappa}(\theta, \vartheta)$ with $\|v\|\leq\varepsilon.$ Taking into account $\partial h_{\kappa}(\theta, \vartheta)=\partial f(\theta)+\kappa\nabla d_
{\mathcal{R}}^{2}(\theta, \vartheta)$ the result follows.
\end{proof}
Since we assume retraction distance function $d_{\mathcal{R}}$ is continuous, we can deduce that the vector field $\nabla d_{\mathcal{R}}^{2}(\theta, \vartheta)$ is continuous, so the conditions of Lemma \ref{lem-2} are very mild. Also, as mentioned previously, the square retraction distance function $d_{\mathcal{R}}^2(\cdot, \vartheta)$ acts like a square normal function in a small neighborhood of $\vartheta$.

%We have the following generalization of the geodesically strongly convex functions.
We define the following retraction-based strongly convex function: 
\begin{definition}
A function $f$ is {\it $\mu$-strongly convex with respect to the retraction} $\mathcal{R}$ if for any points $\theta_1, \theta_2\in\mathcal{M}$ and $\mu>0$ the inequality holds
\begin{equation}
\label{f-strongly-convex-2}
    f(\theta_2)\geq f(\theta_1)+\langle v, \mathcal{R}_{\theta_1}^{-1}\theta_2\rangle+\frac{\mu}{2}d_{\mathcal{R}}^2(\theta_1, \theta_2), \;  v\in\partial f(\theta_1).
\end{equation}
\end{definition}

Then we have the following convergence analysis for the acceleration algorithm $\mathcal{A}_2$:

\begin{theorem}
Fix real-valued constants $\kappa_0, \kappa_{cvx}>0$ and the point $\theta_0\in\mathcal{M}.$ Set $\kappa_{\max}=\max_{k\geq1}\kappa_k.$ Suppose that the number of iterations  $T$  is such that $\bar{\theta}_k$ satisfies \eqref{C1}, and $|\nabla d_{\mathcal{R}}^{2}(\theta, \vartheta)|\leq Kd_{\mathcal{R}}(\theta, \vartheta).$ Define $f^{*}=\lim_{k\rightarrow\infty}f(\theta_k).$ Then for any $N\geq1,$ the iterated sequence generated by the 
acceleration algorithm satisfies
\begin{equation*}
        \min_{j=1, ..., N}\Big\{{\rm dist}^2\big(0_{\bar{\theta}_j}, \partial f(\bar{\theta}_j)\big)\Big\}\leq\frac{8\kappa_{\max}K^2}{N}\left(f(\theta_0)-f^{*}\right).
\end{equation*}
If in addition the function $f$ is $\kappa_{cvx}(K_1^4K_2^4-R_1)$-strongly convex and $S_k$ is chosen so that $\Tilde{\theta}_k$ satisfies \eqref{C2}, then
\begin{equation}\label{acceleration}
   f(\theta_N)-f^{*}\leq\frac{4\kappa_{cvx} K_1^2K_2^2}{(N+1)^2}d_{\mathcal{R}}^2(\theta^{*}, \theta_0),
\end{equation}
  where $\theta^{*}$ is any minimizer of the function $f.$
\end{theorem}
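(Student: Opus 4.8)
The plan is to split the argument into the two claimed bounds, treating the non-convex (stationarity) bound and the convex (accelerated) bound separately, since they rely on different mechanisms. For the first bound, I would start from the descent condition in \eqref{C1}: at each outer iteration $k$, the point $\bar\theta_k$ satisfies $h_{\kappa_k}(\bar\theta_k,\theta_{k-1})\le h_{\kappa_k}(\theta_{k-1},\theta_{k-1})=f(\theta_{k-1})$, which after unfolding the definition of $h_{\kappa_k}$ gives $f(\bar\theta_k)+\tfrac{\kappa_k}{2}d_{\mathcal R}^2(\bar\theta_k,\theta_{k-1})\le f(\theta_{k-1})$. Since $\theta_k$ is chosen so that $f(\theta_k)\le f(\bar\theta_k)$, this yields a telescoping inequality $\tfrac{\kappa_k}{2}d_{\mathcal R}^2(\bar\theta_k,\theta_{k-1})\le f(\theta_{k-1})-f(\theta_k)$. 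Summing over $k=1,\dots,N$ and using $f(\theta_N)\ge f^{*}$ bounds $\sum_k \kappa_k d_{\mathcal R}^2(\bar\theta_k,\theta_{k-1})\le 2(f(\theta_0)-f^{*})$. Then I invoke Lemma~\ref{lem-2} together with the adaptive stationary condition ${\rm dist}(0_{\bar\theta_k},\partial h_{\kappa_k}(\bar\theta_k,\theta_{k-1}))<\kappa_k d_{\mathcal R}(\bar\theta_k,\theta_{k-1})$: Lemma~\ref{lem-2} with $\varepsilon=\kappa_k d_{\mathcal R}(\bar\theta_k,\theta_{k-1})$ and the bound $|\nabla d_{\mathcal R}^2|\le K d_{\mathcal R}$ gives ${\rm dist}(0_{\bar\theta_k},\partial f(\bar\theta_k))\le \kappa_k d_{\mathcal R}(\bar\theta_k,\theta_{k-1})(1+K)$, or after squaring and absorbing constants, ${\rm dist}^2(0_{\bar\theta_k},\partial f(\bar\theta_k))\le 4\kappa_k^2 K^2 d_{\mathcal R}^2(\bar\theta_k,\theta_{k-1})$. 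Multiplying the telescoped sum by the appropriate factor and bounding $\kappa_k\le\kappa_{\max}$, the minimum over $j=1,\dots,N$ is at most the average, giving $\min_j {\rm dist}^2(0_{\bar\theta_j},\partial f(\bar\theta_j))\le \tfrac{1}{N}\sum_j {\rm dist}^2 \le \tfrac{8\kappa_{\max}K^2}{N}(f(\theta_0)-f^{*})$.

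For the accelerated bound \eqref{acceleration}, the plan is to mimic Nesterov's estimate-sequence / potential-function argument, adapted to the retraction setting. Under the strong-convexity hypothesis on $f$, Proposition~\ref{prop1} (and the refinement \eqref{dR-strongly-convex-1}) ensures the subproblems $h_{\kappa_{cvx}}(\cdot,\vartheta_k)$ are genuinely convex, and the extra strong-convexity margin $\kappa_{cvx}(K_1^4K_2^4-R_1)$ is exactly what is needed so that the composite $h_{\kappa_{cvx}}$ inherits a controlled convexity modulus after passing through the bi-Lipschitz distortion constants $K_1,K_2$ of the inverse retraction. The key is to build a Lyapunov function of the form $\Phi_k = \tfrac{1}{\alpha_{k-1}^2}\bigl(f(\theta_k)-f^{*}\bigr) + \tfrac{\kappa_{cvx}}{2}d_{\mathcal R}^2(\theta^{*},\tilde\vartheta_k)$ (up to the $K_1^2K_2^2$ factors), and to show $\Phi_k\le \Phi_{k-1}$ using: (i) the convex one-step progress guaranteed by condition \eqref{C2} on $\tilde\theta_k$, which controls ${\rm dist}(0_{\tilde\theta_k},\partial h_{\kappa_{cvx}}(\tilde\theta_k,\vartheta_k))$ and hence gives a descent inequality on $h_{\kappa_{cvx}}$; (ii) the convexity inequality \eqref{f-convex-2} applied at $\tilde\theta_k$ against both $\theta^{*}$ and the previous iterate; and (iii) the ``law of cosines'' surrogate \eqref{dR-strongly-convex-1} to expand $d_{\mathcal R}^2(\theta^{*},\vartheta_k)$ in terms of $d_{\mathcal R}^2(\theta^{*},\tilde\vartheta_{k-1})$ and $d_{\mathcal R}^2(\theta^{*},\theta_{k-1})$, using the extrapolation definitions $\vartheta_k=\mathcal R_{\theta_{k-1}}(\alpha_k\mathcal R_{\theta_{k-1}}^{-1}\tilde\vartheta_{k-1})$ and $\tilde\vartheta_k=\mathcal R_{\theta_{k-1}}(\tfrac{1}{\alpha_k}\mathcal R_{\theta_{k-1}}^{-1}\tilde\theta_k)$. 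The recursion $\tfrac{1-\alpha_{k+1}}{\alpha_{k+1}^2}=\tfrac{1}{\alpha_k^2}$ (equivalently $\alpha_{k+1}=\tfrac{\sqrt{\alpha_k^4+4\alpha_k^2}-\alpha_k^2}{2}$) is chosen precisely so the cross terms cancel in the telescoping. Iterating $\Phi_N\le\Phi_1\le\cdots$ and using the standard fact that $\alpha_k\le \tfrac{2}{k+1}$ for this recursion with $\alpha_1=1$, together with $f(\theta_N)\le f(\tilde\theta_N)$, yields $f(\theta_N)-f^{*}\le \alpha_{N-1}^2\Phi_1 \le \tfrac{4\kappa_{cvx}K_1^2K_2^2}{(N+1)^2}d_{\mathcal R}^2(\theta^{*},\theta_0)$.

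The main obstacle, I expect, is step (iii): in Euclidean space the extrapolation identities combine with the exact parallelogram law, but on the manifold the retraction only satisfies the one-sided inequality \eqref{dR-strongly-convex-1}, and the three points $\theta^{*}$, $\tilde\vartheta_{k-1}$, $\theta_{k-1}$ are generally not related by a single retraction curve. One must carefully track how the bi-Lipschitz constants $K_1,K_2$ propagate when transferring between $\|\mathcal R_{\theta_{k-1}}^{-1}(\cdot) - \mathcal R_{\theta_{k-1}}^{-1}(\cdot)\|$ and $d_{\mathcal R}(\cdot,\cdot)$, and verify that the strong-convexity surplus $\kappa_{cvx}(K_1^4K_2^4-R_1)$ assumed on $f$ is exactly enough to dominate the resulting distortion — this is where the quartic powers of $K_1,K_2$ in the hypothesis come from, and getting the bookkeeping to close with the clean constant $4\kappa_{cvx}K_1^2K_2^2$ is the delicate part. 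A secondary concern is ensuring all iterates remain within the neighborhood $D$ (and within distance $\delta$ of the relevant prox-centers) where \eqref{dR-strongly-convex-2} and \eqref{dR-strongly-convex-1} are valid; I would either assume this as a standing local hypothesis or argue it inductively from the descent condition, which keeps $d_{\mathcal R}(\bar\theta_k,\theta_{k-1})$ small.
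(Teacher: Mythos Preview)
Your treatment of the first (stationarity) bound is essentially the paper's proof: descent condition $\Rightarrow$ telescoping on $\frac{\kappa_k}{2}d_{\mathcal R}^2(\bar\theta_k,\theta_{k-1})$, then Lemma~\ref{lem-2} with $\varepsilon=\kappa_k d_{\mathcal R}(\bar\theta_k,\theta_{k-1})$ to bound ${\rm dist}(0,\partial f(\bar\theta_k))\le 2\kappa_k K d_{\mathcal R}(\bar\theta_k,\theta_{k-1})$, square, sum, and average. Fine.

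For the accelerated bound your overall Lyapunov strategy is right, but your proposed mechanism for step~(iii) is not what the paper does, and as stated it would not close. The paper does \emph{not} use the surrogate law of cosines \eqref{dR-strongly-convex-1}. Instead, after writing the $\kappa_{cvx}K_1^4K_2^4$-strong-convexity inequality for $h_{\kappa_{cvx}}$ at $\tilde\theta_k$, it substitutes the \emph{specific} test point $\theta=\mathcal R_{\theta_{k-1}}\bigl(\alpha_k\,\mathcal R_{\theta_{k-1}}^{-1}\theta^{*}\bigr)$ and then pulls every retraction distance back to the tangent space $T_{\theta_{k-1}}\mathcal M$ via the bi-Lipschitz bounds $\tfrac{1}{K_1}d_{\mathcal R}\le\|\mathcal R_{\theta_{k-1}}^{-1}(\cdot)-\mathcal R_{\theta_{k-1}}^{-1}(\cdot)\|\le K_2 d_{\mathcal R}$. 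In that single tangent space the extrapolation definitions are exact linear relations, $\mathcal R_{\theta_{k-1}}^{-1}\vartheta_k=\alpha_k\mathcal R_{\theta_{k-1}}^{-1}\tilde\vartheta_{k-1}$ and $\mathcal R_{\theta_{k-1}}^{-1}\tilde\theta_k=\alpha_k\mathcal R_{\theta_{k-1}}^{-1}\tilde\vartheta_k$, so the $\alpha_k^2$ scaling emerges from Euclidean algebra with no curvature error; converting back to $d_{\mathcal R}$ costs two more factors of $K_1,K_2$, and the quartic margin $K_1^4K_2^4$ in the hypothesis is precisely what makes both the positive and negative distance terms land with the common coefficient $K_1^2K_2^2$ needed for telescoping. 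Your route via \eqref{dR-strongly-convex-1} runs into two problems: that inequality is only an \emph{upper} bound on $d_{\mathcal R}^2$ at an interpolated point, but the term $-K_1^4K_2^4 d_{\mathcal R}^2(\tilde\theta_k,\theta)$ requires a \emph{lower} bound to relate it to $d_{\mathcal R}^2(\theta^{*},\tilde\vartheta_k)$; and even on the side where the sign is favorable, \eqref{dR-strongly-convex-1} produces $\alpha_k$ rather than $\alpha_k^2$ scaling, which breaks the recursion $\frac{1-\alpha_k}{\alpha_k^2}=\frac{1}{\alpha_{k-1}^2}$.

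Two smaller omissions. First, the convexity of $f$ is used in the paper not at $\tilde\theta_k$ but to bound $f$ at the interpolated test point: $f\bigl(\mathcal R_{\theta_{k-1}}(\alpha_k\mathcal R_{\theta_{k-1}}^{-1}\theta^{*})\bigr)\le \alpha_k f(\theta^{*})+(1-\alpha_k)f(\theta_{k-1})$. Second, your Lyapunov recursion is not exactly $\Phi_k\le\Phi_{k-1}$: the inexactness in \eqref{C2} introduces, after a Cauchy--Schwarz / completing-the-square step on $\langle v_k,\mathcal R_{\tilde\theta_k}^{-1}\theta\rangle$, a multiplicative loss $A_k=1-\tfrac{1}{(k+1)^2}$ so that the recursion reads $\Phi_k\le A_{k-1}^{-1}\Phi_{k-1}$, and one needs the elementary bound $\prod_{k=2}^{N}A_{k-1}^{-1}\le 2$ to finish.
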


The detailed proof of this theorem can be found in the Appendix.

\begin{remark}
If the original method $\mathcal A$ has a linear rate of convergence then our method $\mathcal A_2$ also converges to the local minimum for the strongly convex case.  If the knowledge of the strong-convexity is given, then some existing method can achieve optimal linear rate for smooth and convex functions \citep{pmlr-v49-zhang16b}, however, it is overall an extremely difficult to verify convexity of a function on a manifold, and our method adapts to that without requiring the knowledge of the convexity. 
Note that our algorithm also applies to the subgradient descent method, where instead of gradient of the function one takes the subdifferential, for non-smooth functions. In this case, for the strongly-convex objective, the subgradient method converges to the optimum with  $O(1/N)$  rate of convergence (see \cite{pmlr-v49-zhang16b}). Thus our  accelerated rate  $O(1/N^2)$ can be considered optimal for strongly-convex functions on the manifold.
%\lizhen{[Delete this statement:] From the last equation \eqref{acceleration} one can see that the proposed algorithm $\mathcal A_2$ has the optimal rate $O\Big(\sqrt{\frac{1}{\varepsilon}}\Big)$ of convergence in the strongly-convex case.}
%, which is a nice improvement for the gradient based methods that have rate of convergence $O\Big(\frac{1}{\varepsilon}\Big).$  
%From the last equation \eqref{acceleration} one can see that the proposed algorithm $\mathcal A_2$ has the accelerated rate $O\Big(\sqrt{\frac{1}{\varepsilon}}\Big)$ of convergence in the strongly-convex case, whereas the initial algorithm $\mathcal A$ has a linear rate of convergence $O\Big(\frac{1}{\varepsilon}\Big).$
\end{remark}

%\begin{remark}
%Our method might not be optimal for the convex case due to not having the knowledge of the convexity or non-convexity of the objective function. We wish to mention that it is overall a very difficult  task to verify that an objective function over manifolds is convex, strongly convex or non-convex, especially for complex objective
% functions over manifolds. Our key philosophy is that our approach can adapt to the weak convexity of the objective in the sense that if
% the function is indeed strongly convex, we gain acceleration, while guaranteeing convergence to stationary point if the
% objective function is weakly convex (not-convex).
%\end{remark} 
%

%\begin{remark}
%Since the manifold doesn't have global coordinates and, moreover, the inverse retraction mapping is defined locally, the convergence to the minimum is guaranteed only locally.
%\end{remark}

\section{Simulation study and data analysis}\label{sec-simu}
To examine the convergence and acceleration rates of our proposed algorithm, we first apply our method to the estimation of both intrinsic and extrinsic Fr\'echet means on spheres, in which one has the exact optima for comparison in the case of extrinsic mean. We also apply our algorithm to the Netflix movie-ranking data set as an example of optimization over Grassmannian manifolds in the low-rank matrix completion problem. 

%In the following results, we demonstrate the utility of our algorithm both for high dimensional manifold-valued data (Section~\ref{subsec:frechet}) and Euclidean space data with non-Euclidean parameters (Section~\ref{subsec:netflix}). We wrote the code for our implementations in Python and carried out the parallelization of the code through MPI\footnote{Our code is available at \url{https://github.com/michaelzhang01/parallel_manifold_opt}}\cite{Dalcin:2005}.

\subsection{Estimation of intrinsic Fr\'echet means on manifolds}\label{subsec:frechet}
We first consider the estimation problem of Fr\'echet means on manifolds \citep{frechet}. % In particular, the manifold under consideration is the sphere in which we wish to estimate the  mean of a sphere \cite{linclt}. 
In this simple example, we have observations $\{x_1,\ldots, x_N\}$ that lie on a sphere $ \mathbb{S}^{d} $ and our goal is to estimate the sample mean:
\begin{align}
\hat\theta=\arg\min_{\theta\in S^d} f(\theta), \hspace{1em} f(\theta) = \sum_{i=1}^n \rho^2(\theta, x_i).
\end{align}
If $\rho$ is the embedded distance metric in the Euclidean space, then there exists a closed form solution $\hat\theta=\sum_{i=1}^N x_i/\| \sum_{i=1}^N x_i\|$, which is the projection of the Euclidean mean $\bar{x}$ onto the sphere \citep{rabibook}.  This is called the \emph{extrinsic mean.} When $\rho$ is taken to be the geodesic or intrinsic distance, $\hat\theta$ is called the \emph{intrinsic mean}.  We will consider estimation of both extrinsic and intrinsic means using our method compared to other optimization techniques. 

One simple examples of a retraction map for $\mathbb{S}^d$ is
\begin{equation*}
    \mathcal{R}_{\vartheta}v=\frac{\vartheta+v}{|\vartheta+v|},
\end{equation*}
where $|\cdot|$ is the Euclidean norm in $\mathbb{R}^{d+1}.$
Therefore the inverse retraction has the following expression
\begin{equation*}
    \mathcal{R}_{\vartheta}^{-1}\theta=\frac{1}{\vartheta^T\theta}\theta-\vartheta.
\end{equation*}

%However, if $ \rho $ is the arc length of points measured on the great circle, $\rho(\theta,x) = \arccos( \langle \theta,x \rangle)$, then we no longer have a closed form expression for the sample mean and therefore we must estimate the intrinsic mean by optimizing the loss function computationally.

We first compare our accelerated method against gradient descent optimization and a Newton-type optimization scheme, DANE \citep{Shamir}, and a Nesterov method, RAGD \citep{zhang2018towards}, adapted for manifolds. For all the experiments in this section,  we optimized the step size of the optimizer  using an Armijo condition backtracing line seach \citep{armijo1966minimization} where we reduce the step size by a factor of $ .95 $ until the difference between the old loss function evaluation and the new one is $ 10^{-5} \times .95$. For our Catalyst algorithm manifold we set the A2 budget to $ S=10 $, the A1 number of iterations to $ T=5 $, and cutoff parameter for A1 is initialized at $ .1 $. For the DANE results, we set the regularization term to $ 1 $, For RAGD we set the shrinkage parameter to $ 1 $. Our synthetic data set is 10,000 observations generated i.i.d from a $100$ dimensional $\mbox{N}(0, I)$ distribution projected onto $ \mathbb{S}^{99} $. 

We run each optimization routine for 100 iterations. Figure~\ref{fig:intrinsic_comparison} and Figure~\ref{fig:extrinsic_comparison} shows that our novel accelerated method converges, for an intrinsic mean as well as an extrinsic mean example, to an optima in fewer iterations than the other competing methods, both in terms of the loss function value and the norm of the loss function gradient. Moreover, we can see in the intrinsic mean example, our method is able to obtain a smaller loss function and gradient norm than the competing methods. In the extrinsic mean example, our method obtains a comparable loss function value and MSE between the learned parameter and the closed-form expression of the sample mean with other methods in fewer iterations and obtains a smaller gradient norm than the competing methods.

By explicit calculation we show the objective functions are strongly convex over a neighborhood of any point on the manifold (see the Appendixx for a proof). This is a highly non-trivial task for general objective functions, hence necessitating an adaptive method such as ours. Moreover, in the extrinsic mean example, since we have a closed form expression of the Fr\'echet mean we also show that our optimization approach converges to the true extrinsic mean in terms of mean squared error faster than the other optimization methods.  %Our simulation study demonstrates  accelerated convergence.
\begin{figure}
	\centering
	\includegraphics[width=.5\linewidth]{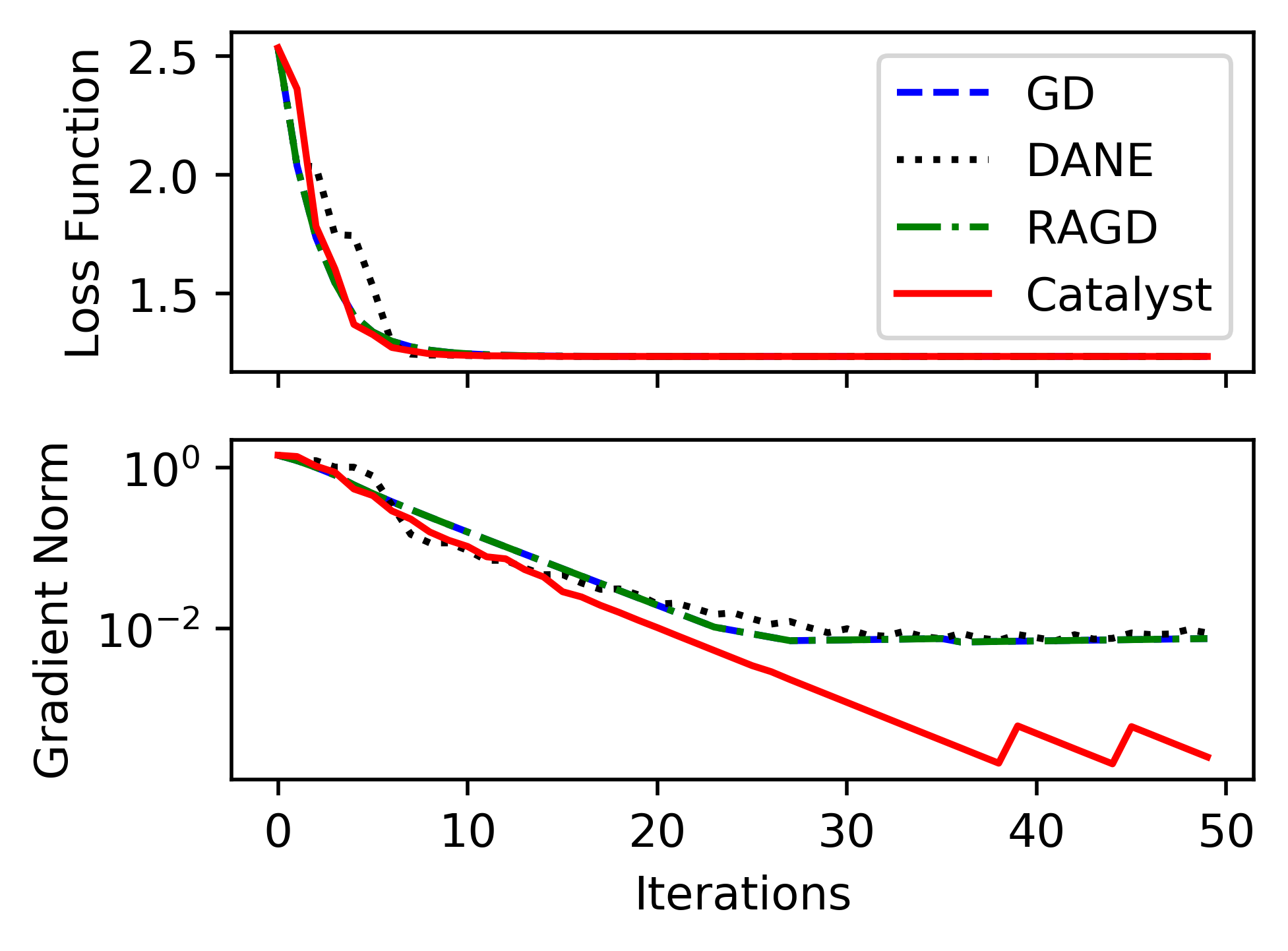}\includegraphics[width=.5\linewidth]{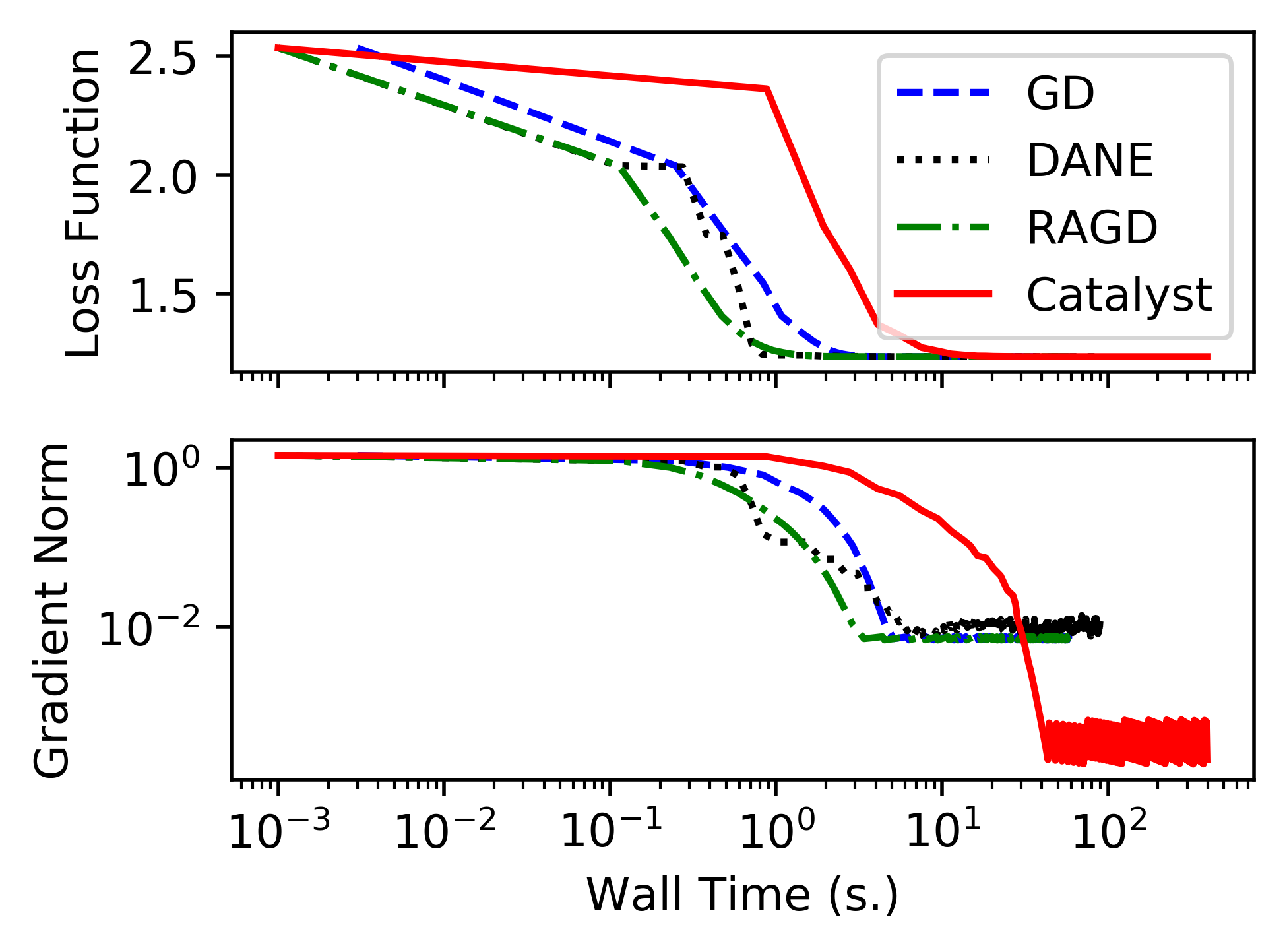}
	\caption{Intrinsic mean comparison on spheres} 
	\label{fig:intrinsic_comparison}
\end{figure}
\begin{figure}
	\centering
	\includegraphics[width=.5\linewidth]{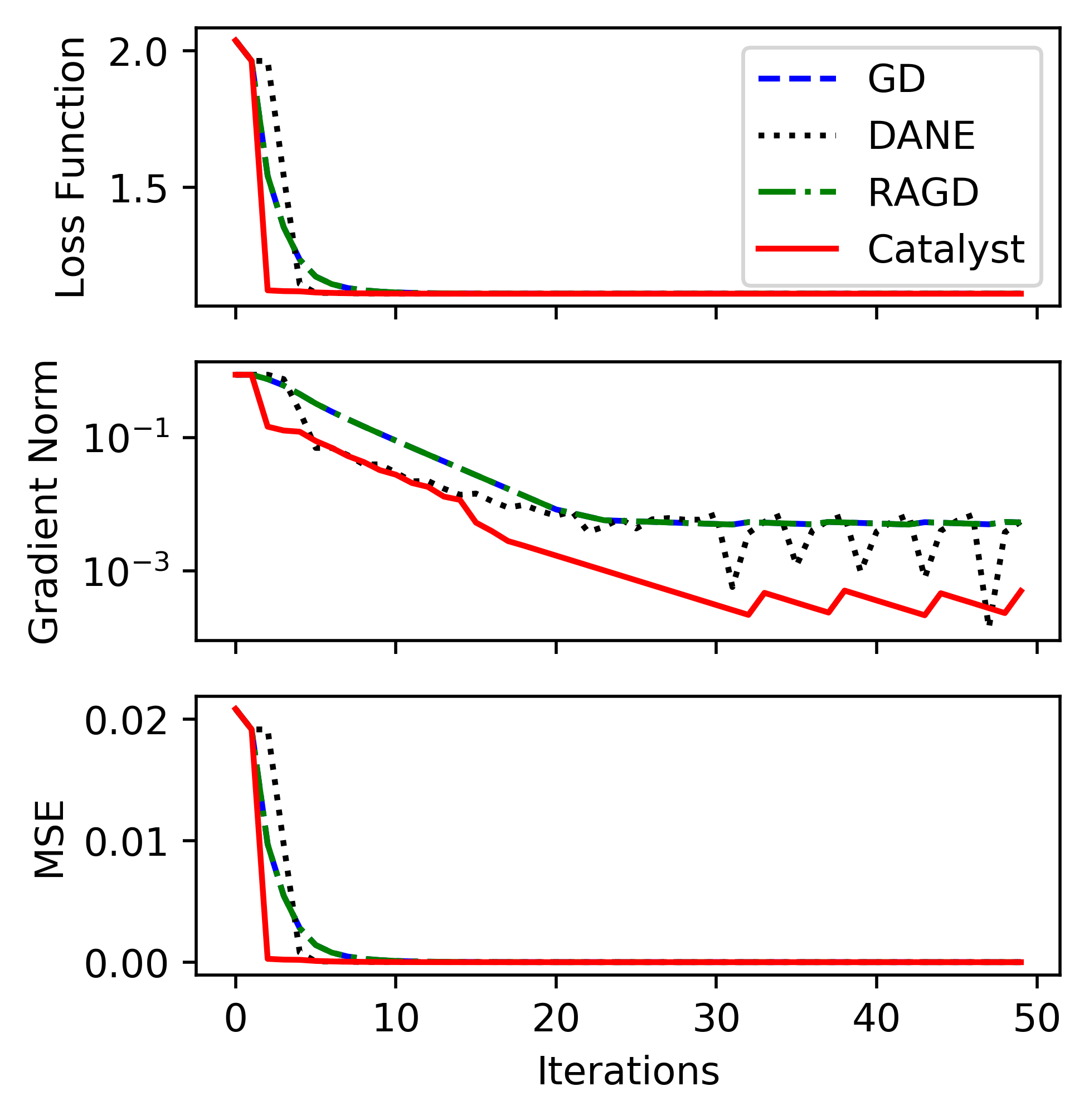}\includegraphics[width=.5\linewidth]{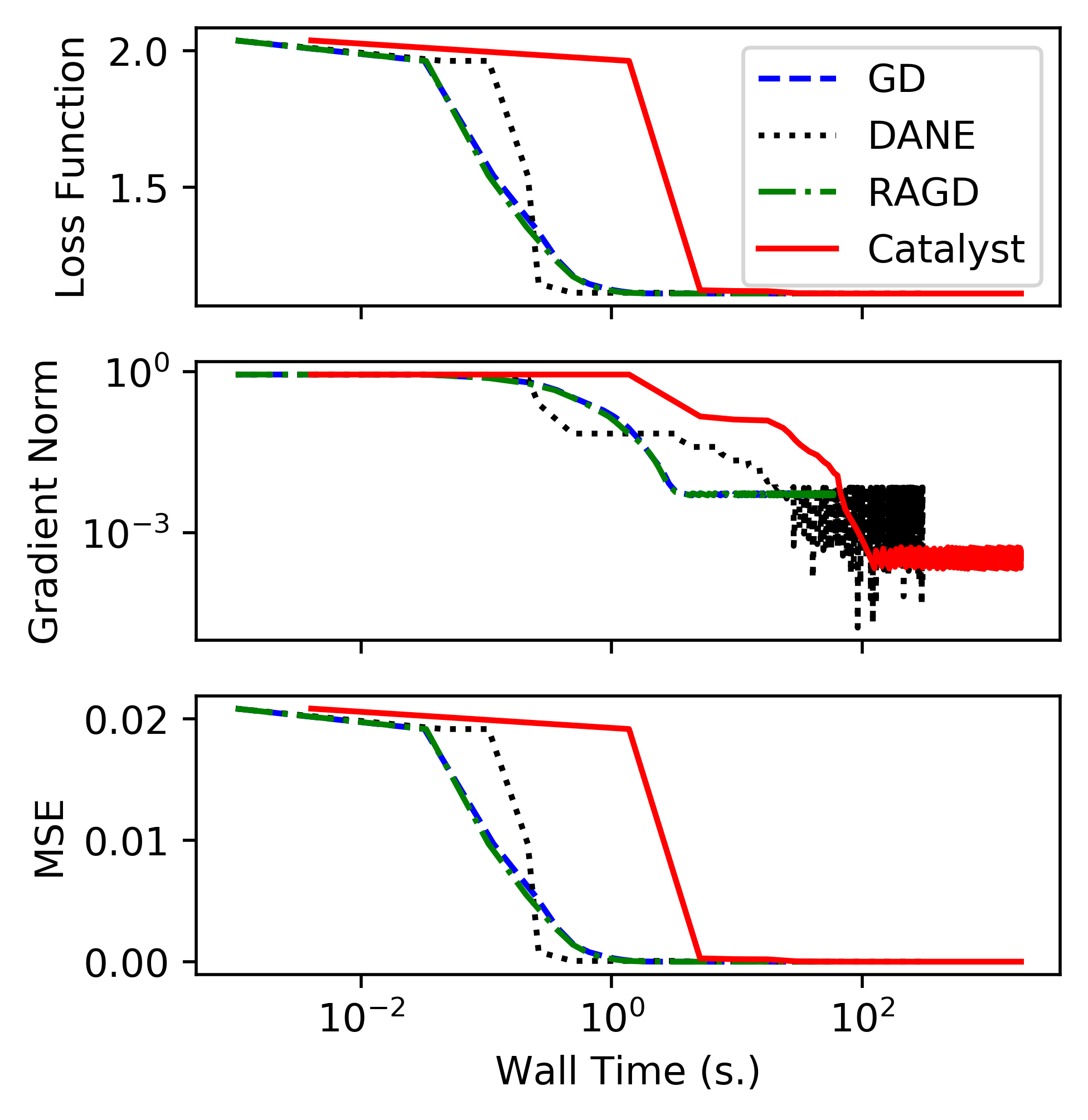}
	\caption{Extrinsic mean comparison on spheres} 
	\label{fig:extrinsic_comparison}
\end{figure}
\subsection{Real data analysis: the Netflix example}\label{subsec:netflix}
Next, we consider an application of our algorithm to the Netflix movie rating dataset. This dataset of over a million entries, $ X \in \mathbb{R}^{M \times N} ,$ consists of $M = 17770$ movies and $N = 480189$ users, in which only a sparse subset of the users and movies have ratings. In order to build a better recommendation systems to users, we can frame the problem of predicting users' ratings for movies as a low-rank matrix completion problem by learning the rank-$ r $ Grassmannian manifold $ U \in \mbox{Gr(M, r)} $ which optimizes for the set of observed entries $ (i,j) \in \Omega$ the loss function
 \begin{equation}
		L(U) = \frac{1}{2} \sum_{(i,j) \in \Omega} \left\{ (UW)_{ij} - X_{ij} \right\}^{2} + \frac{\lambda^{2}}{2} \sum_{(i,j) \notin \Omega}(UW)_{ij},		
\end{equation}
where $W$ is an $r$-by-$N$ matrix. Each user $ k$ has the loss function $\mathcal{L}(U,k)=\frac{1}{2} \left|  c_k\circ\left(  Uw_k(U)-X_k \right)     \right|^{2}$ , where $\circ$ is  the Hadamard product, $(w_k)^{i}=W_{ik},$ and
\begin{equation*}
\begin{split}
    (c_{k})^i=\begin{cases}
    1, & {\rm if} \ \ \ (i, k)\in \Omega \\
    \lambda, & {\rm if} \ \ \ (i, k)\notin \Omega 
    \end{cases},
    \qquad
    (X_{k})^i=\begin{cases}
    X_{ik}, & {\rm if} \ \ \ (i, k)\in \Omega \\
    0, & {\rm if} \ \ \ (i, k)\notin \Omega, 
    \end{cases}\\
    w_k(U)=\big(U^T{\rm diag}(c_k\circ c_k)U\big)^{-1}U^T\big(c_k\circ c_k\circ X_k\big).
\end{split}
\end{equation*}
This results in the following gradient
\begin{align*}
    \nabla\mathcal{L}(U, k)&= \big(c_k\circ c_k\circ(Uw_k(U)-X_k)\big)w_k(U)^T\\
    &= {\rm diag}(c_k\circ c_k)(Uw_k(U)-X_k)w_k(U)^T.
\end{align*}

For this problem on Grassman manifolds, we have the retraction map:

\begin{equation}
\mathcal{R}_{V}U = U+V
\end{equation}

and the inverse retraction map:

\begin{equation}
\mathcal{R}_{V}^{-1}U = V - U(U^{T}U)^{-1}U^{T}V
\end{equation}

%Due to the scale of the data, 
We look at a comparison of our method against a standard gradient descent method on a subset of the data where we only observe a million ratings ($ \approx 1.5\%$ of the full data set). In this setting we fix the matrix rank $ r=5 $ and the regularization parameter $ \lambda = .01 $. Figure~\ref{fig:parallel_netflix} shows that our accelerated method obtains a smaller loss function value, a smaller identical test set MSE, and nearly identical loss gradient norm faster than RAGD, DANE, or a typical gradient descent approach.

On a large scale, we apply a parallelized version of our accelerated method  and a communication-efficient parallel algorithm on manifolds proposed in \cite[ILEA]{lizhennips2018} on the full Netflix dataset. We randomly distribute the data across 64 processors and run the optimization routine for 200 iterations. In Figure~\ref{fig:parallel_netflix}, again we can see steady acceleration that our method provides in terms of the loss function value across iterations and the loss of gradient norm though ILEA obtains slightly better test set MSE than our method. 
%However, our algorithm achieves more rapid convergence. In particular, we plot the loss function  (in the third panel of the plot) as a function of iterations from 50 to 200 in which one can see the steady improvement of our algorithms. 
% noting the scale of the y-axis, the difference of their predictive performances is \lizhen{quite small}.
\begin{figure}
	\centering
	\includegraphics[width=.5\linewidth]{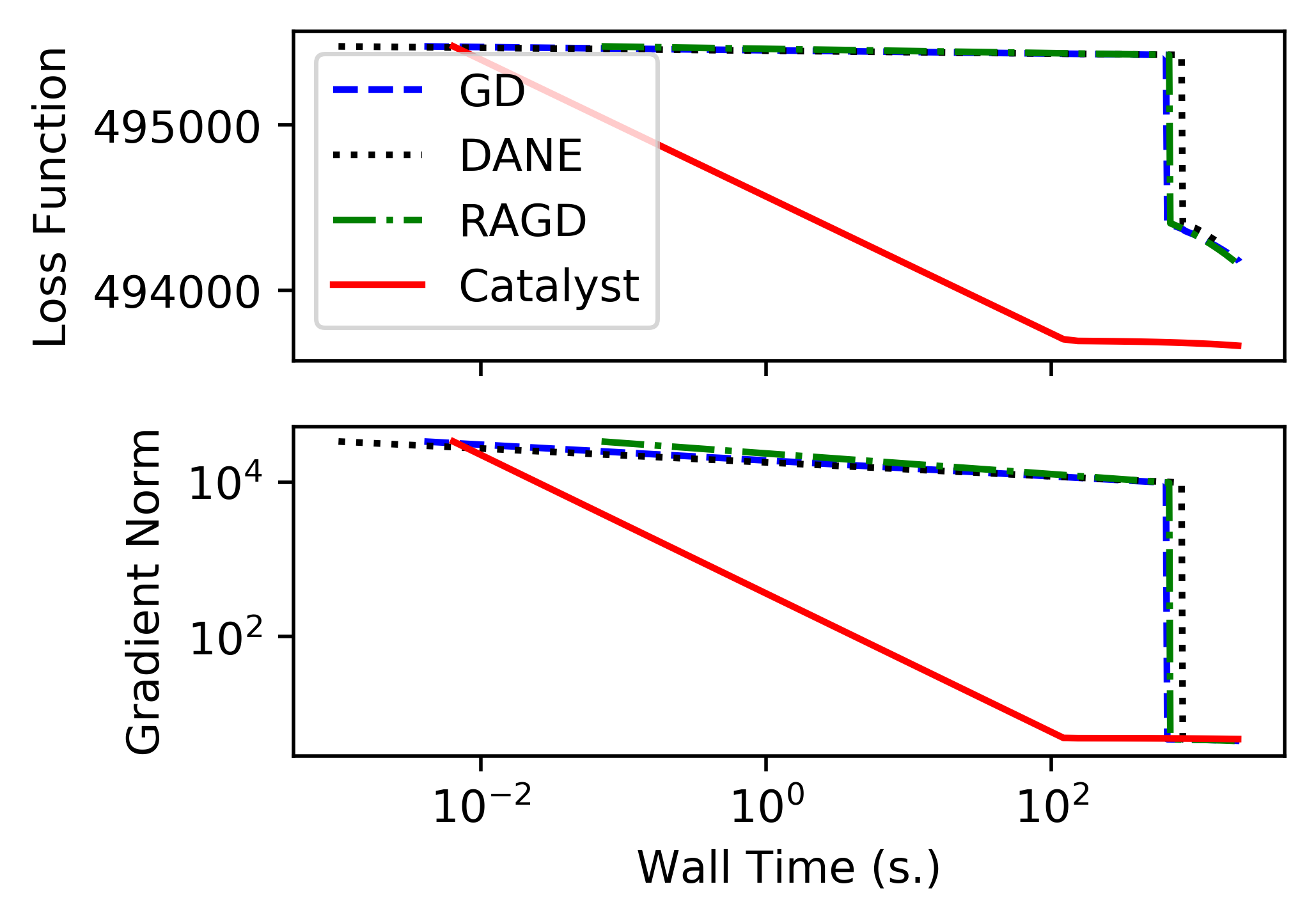}\includegraphics[width=.5\linewidth]{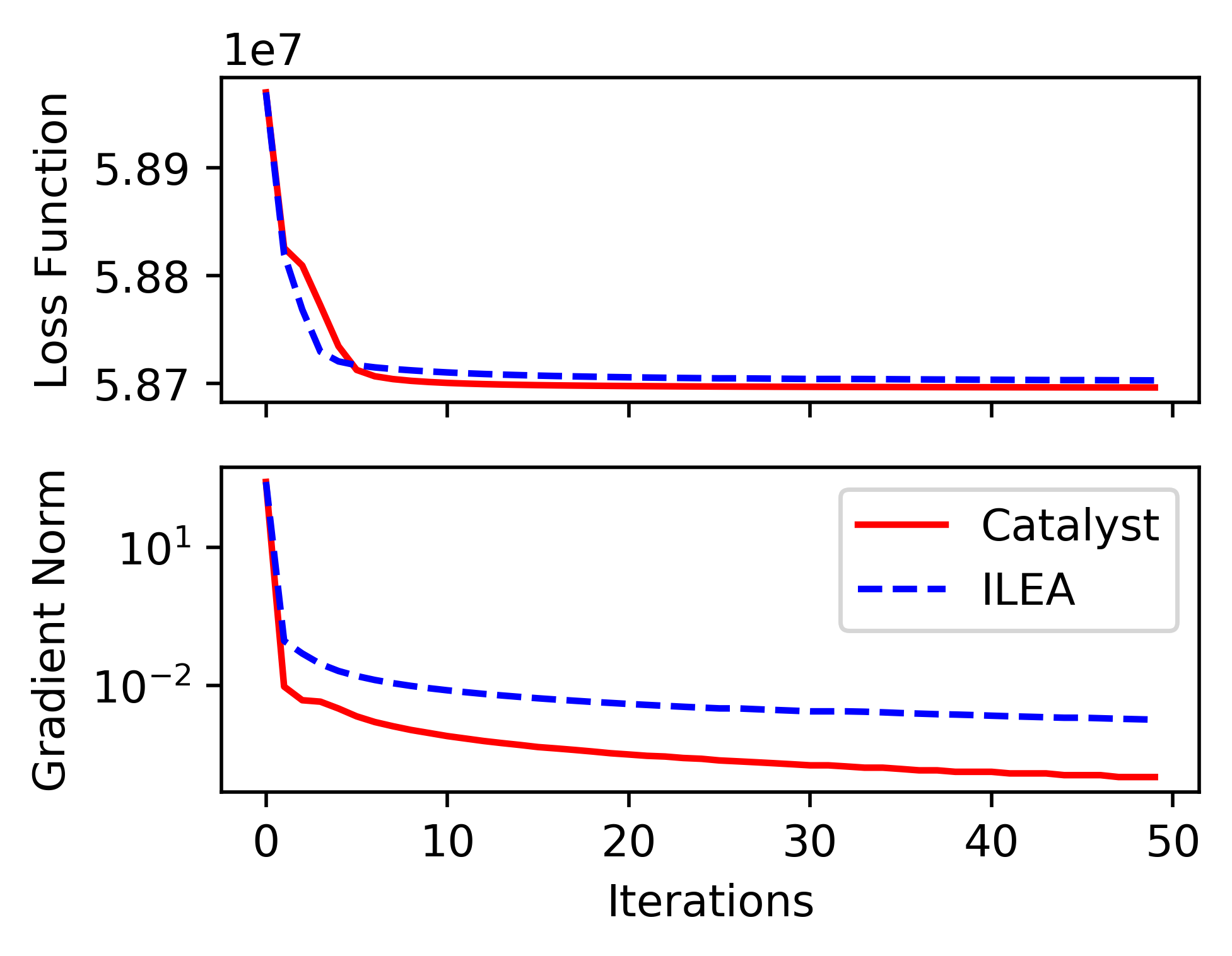}\\
	\includegraphics[width=.5\linewidth]{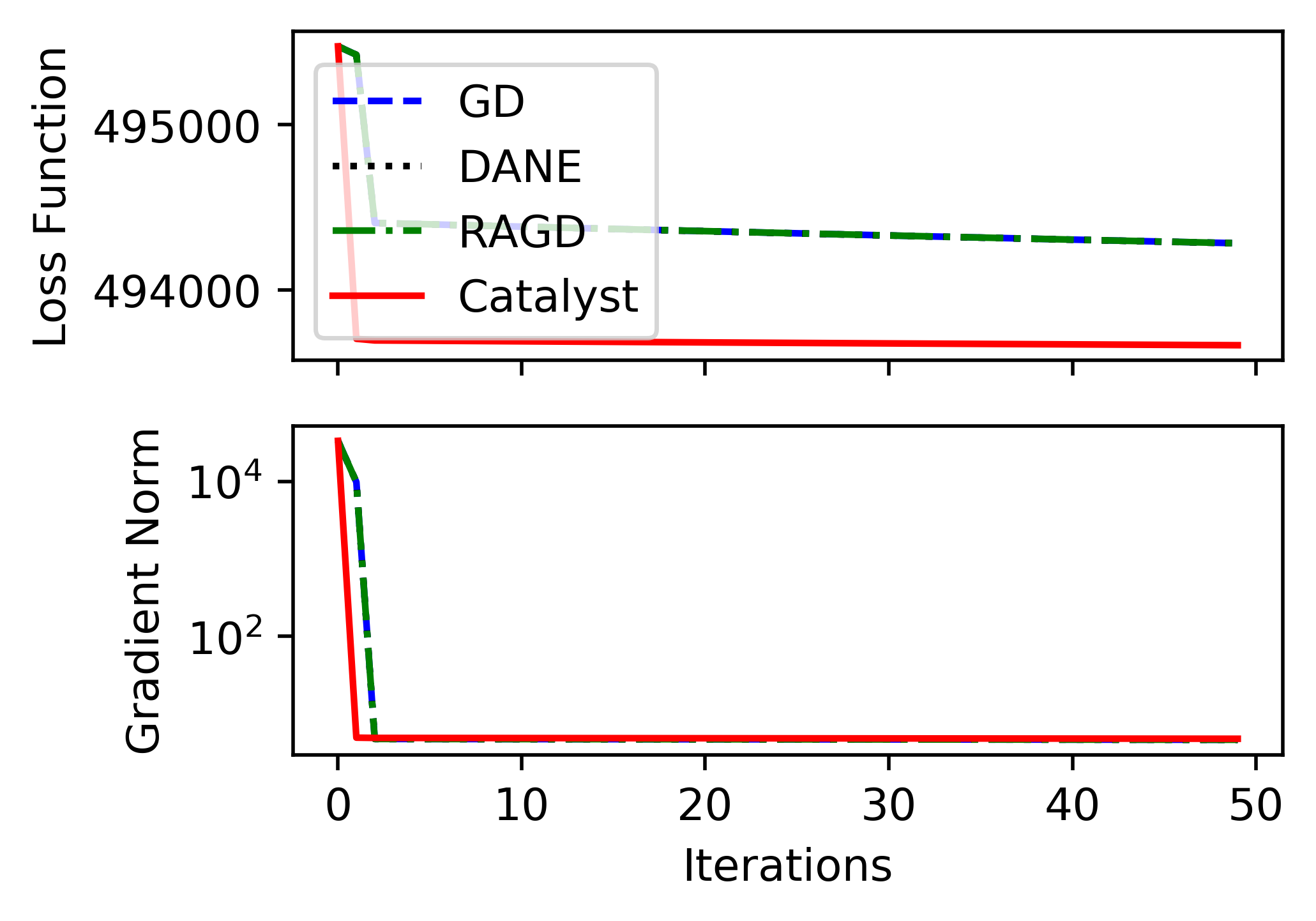}\includegraphics[width=.5\linewidth]{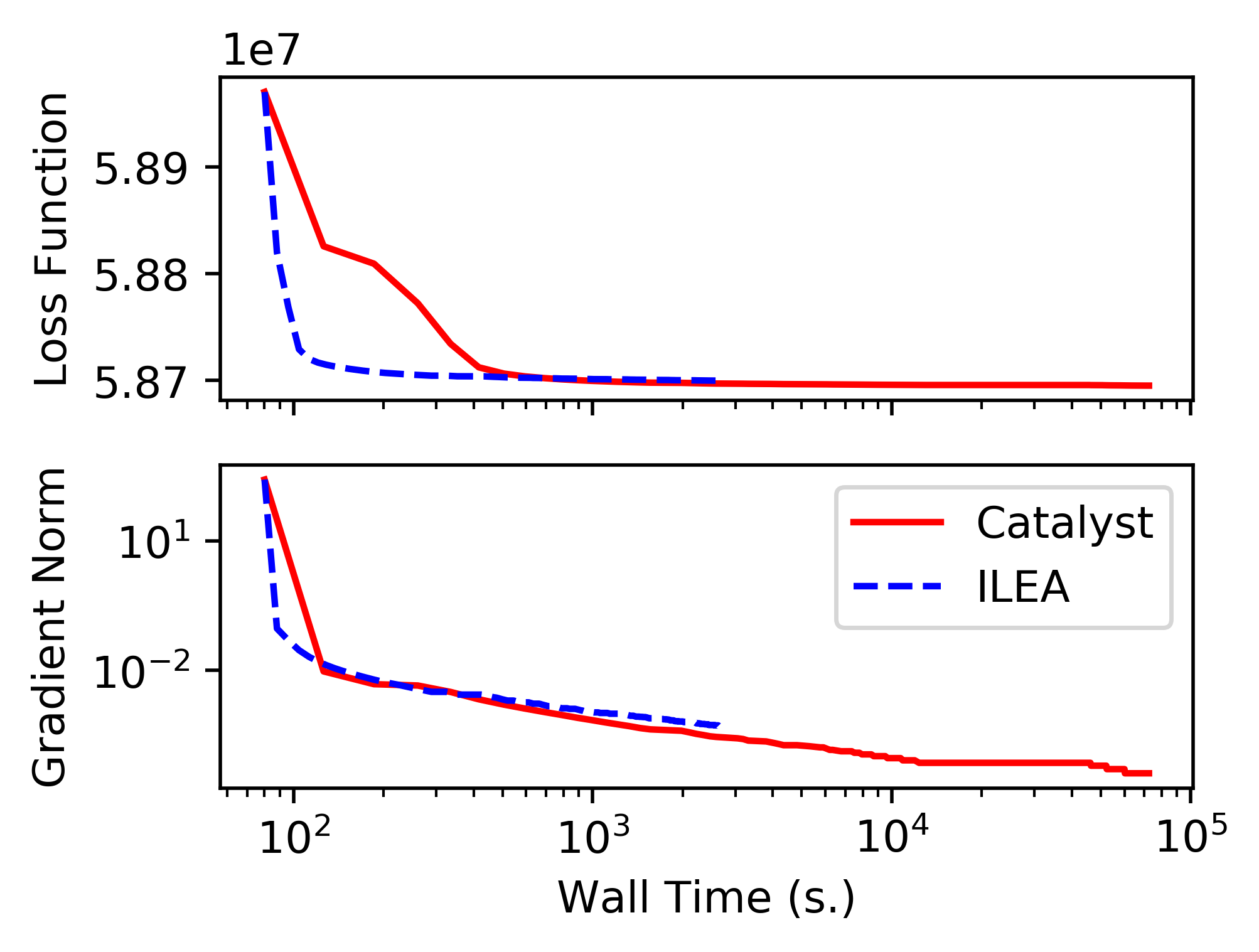}
%	\caption{Acceleration results for the Netflix example.}\label{fig:netflix}.
%\end{figure}
%\begin{figure}
%	\centering
	\caption{Results for the parallel (right) and reduced (left) Netflix example.}\label{fig:parallel_netflix}.
\end{figure}

\section{Conclusion and Discussion }\label{sec:conclusion}

We propose a general scheme for solving non-convex optimization on manifolds which yields theoretical guarantees of convergence to a stationary point when the objective function is non-convex. When the objective function is convex, it leads to accelerated  convergence rates for a large class of first order methods, which we show in our numerical examples. One of the interesting future directions  we want to pursue is  proposing accelerated algorithms on statistical manifolds (manifolds of densities or distributions) by employing information-geometric techniques, and applying the algorithms to accelerate convergence and mixing MCMC algorithms. %that can potentially lead to better mixing rates. 

%We propose in this paper a communication efficient parallel algorithm for general optimization problems on manifolds which is applicable to many different manifold spaces and loss functions. Moreover, our proposed algorithm can explore the geometry of the underlying space efficiently and perform well in simulation studies and practical examples all while having theoretical convergence guarantees.
%
%In the age of ``big data'', the need for distributable inference algorithms is crucial as we cannot reliably expect entire datasets to sit on a single processor anymore. Despite this, much of the previous work in parallel inference has only focused on data and parameters in Euclidean space. Realistically, much of the data that we are interested in is better modeled by manifolds and thus we need fast inference algorithms that are provably suitable for situations beyond the Euclidean setting. In future work, we aim to extend the situations under which parallel inference algorithms are generalizable to manifolds and demonstrate more critical problems (in neuroscience or computer vision, for example) in which parallel inference is a crucial solution.

\section{Appendix}
\subsection{Proof to Theorem 2}
\label{suppl-proofs}

We first introduce a simple lemma.

%We then have the following lemmas.
\begin{lemma}
Suppose the sequence $\{\alpha_k\}_{k\geq1}$ is produced by $\mathcal{A}_2.$ Then, the following bounds hold for all $k\geq1$
\begin{equation*}
    \frac{\sqrt{2}}{k+2}\leq\alpha_k\leq\frac{2}{k+1}.
\end{equation*}
\end{lemma}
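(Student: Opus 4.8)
The plan is to establish the two-sided bound on $\alpha_k$ by induction on $k$, using the recursion $\frac{1-\alpha_{k+1}}{\alpha_{k+1}^2}=\frac{1}{\alpha_k^2}$ (equivalently $\alpha_{k+1}=\frac{\sqrt{\alpha_k^4+4\alpha_k^2}-\alpha_k^2}{2}$) together with the initialization $\alpha_1=1$. It is cleaner to work with the auxiliary quantity $\beta_k=\frac{1}{\alpha_k}$, since the recursion rearranges to $\beta_{k+1}^2-\beta_{k+1}=\beta_k^2$, i.e. $\beta_{k+1}=\frac{1+\sqrt{1+4\beta_k^2}}{2}$. The desired conclusion $\frac{\sqrt2}{k+2}\le\alpha_k\le\frac{2}{k+1}$ is then equivalent to $\frac{k+1}{2}\le\beta_k\le\frac{k+2}{\sqrt2}$, and this is what I would actually prove by induction.

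First I would check the base case $k=1$: $\beta_1=1$, and indeed $\frac{2}{2}=1\le 1\le\frac{3}{\sqrt2}\approx 2.12$, so both inequalities hold. For the inductive step I would use monotonicity of the map $g(t)=\frac{1+\sqrt{1+4t^2}}{2}$ in $t>0$: if $\beta_k\ge\frac{k+1}{2}$ then $\beta_{k+1}=g(\beta_k)\ge g\!\left(\frac{k+1}{2}\right)=\frac{1+\sqrt{1+(k+1)^2}}{2}\ge\frac{1+(k+1)}{2}=\frac{k+2}{2}$, which is the lower bound at level $k+1$. Here I used $\sqrt{1+(k+1)^2}\ge k+1$. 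For the upper bound, from $\beta_k\le\frac{k+2}{\sqrt2}$ I get $\beta_{k+1}=g(\beta_k)\le g\!\left(\frac{k+2}{\sqrt2}\right)=\frac{1+\sqrt{1+2(k+2)^2}}{2}$, and I would bound $\sqrt{1+2(k+2)^2}\le\sqrt{2}(k+2)+c$ for a suitable correction; more precisely one wants $\frac{1+\sqrt{1+2(k+2)^2}}{2}\le\frac{k+3}{\sqrt2}$. Squaring the target inequality $\sqrt{1+2(k+2)^2}\le\sqrt2(k+3)-1$ reduces it (after expansion) to $0\le 4(k+2)+ (\text{nonnegative terms})$ type statement, which holds for all $k\ge1$; I would present this squaring step carefully since one must first confirm the right-hand side $\sqrt2(k+3)-1$ is positive (it is, for $k\ge1$) before squaring.

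The main obstacle is the upper-bound half of the inductive step: the naive substitution does not immediately telescope, and one must choose the inductive hypothesis with the right additive slack (the $+2$ rather than $+1$ in $\frac{k+2}{\sqrt2}$) so that the $+1$ contributed by the recursion each step is absorbed. I would therefore first reverse-engineer the constant by asking what hypothesis $\beta_k\le\frac{k+a}{\sqrt2}$ is preserved by $g$; carrying the algebra through $g\!\left(\frac{k+a}{\sqrt2}\right)\le\frac{k+1+a}{\sqrt2}$ shows any $a\ge 2$ works, and $a=2$ is compatible with the base case, which pins down the stated bound. A symmetric but easier check handles the lower bound. Once both halves of the induction are in place, translating back from $\beta_k$ to $\alpha_k=1/\beta_k$ gives exactly $\frac{\sqrt2}{k+2}\le\alpha_k\le\frac{2}{k+1}$, completing the proof.
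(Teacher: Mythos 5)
Your proof is correct. Note that the paper itself states this lemma without any proof (it is the standard estimate for Nesterov's extrapolation sequence, cf.\ Nesterov's and Paquette et al.'s analyses), so there is no argument in the paper to compare against; your induction on $\beta_k=1/\alpha_k$ with the recursion $\beta_{k+1}^2-\beta_{k+1}=\beta_k^2$ is exactly the standard way to fill this gap. Both halves check out: the lower bound on $\beta_k$ follows from $\sqrt{1+(k+1)^2}\ge k+1$, and for the upper bound the squaring step reduces to $(4-2\sqrt2)k+10-6\sqrt2\ge0$, which holds for all $k\ge1$ (indeed for all $k\ge0$). One minor remark: your reverse-engineering comment that ``any $a\ge2$ works'' undersells the situation --- the inductive step only needs $k+a\ge\sqrt2/2$ and the base case only needs $a\ge\sqrt2-1$, so in fact $\beta_k\le\frac{k+1}{\sqrt2}$ (equivalently $\alpha_k\ge\frac{\sqrt2}{k+1}$) already holds; but since the lemma claims the weaker bound with $k+2$, this does not affect correctness.
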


%To better address the challenges arising from these applications.  Many of the data of 
%   there is a critical need for developing valid models and efficient inference methods  for statistical inference related to such data.

%5 most data in machine vision and pattern recognition is intrinsically non-Euclidean, i.e. standard Euclidean calculus does not apply.

\begin{proof}[Proof of Theorem 2]
The descent condition in 
\begin{align}\label{C1}
\begin{split}
    {\rm dist}\big(0_{\bar{\theta}_k}, \partial h(\bar{\theta}_k, \theta_{k-1})\big)&<\kappa_k d_{\mathcal{R}}(\bar{\theta}_k, \theta_{k-1}) \quad \text{and} \\
    \quad h_{\kappa_k}(\bar{\theta}_k, \theta_{k-1})&\leq h_{\kappa_k}(\theta_{k-1}, \theta_{k-1}),
    \end{split}
\end{align} 
implies $\{f(\theta_k)\}_{k\geq0}$ are monotonically decreasing. 
From this 
\begin{equation}\label{T-1}
\begin{split}
    f(\theta_{k-1})&=h_{\kappa}(\theta_{k-1}, \theta_{k-1})\\
                          &\geq h_{\kappa}(\bar{\theta}_k, \theta_{k-1})\\
                          &\geq f(\theta_k)+\frac{\kappa}{2}d_{\mathcal{R}}^2(\bar{\theta}_k, \theta_{k-1}).
\end{split}
\end{equation}
Using condition \eqref{C1}, we apply Lemma 2 with $\vartheta=\theta_{k-1}, \theta=\bar{\theta}_k$ and $\varepsilon=\kappa Kd_{\mathcal{R}}(\bar{\theta}_k, \theta_{k-1});$ hence
\begin{equation*}
    {\rm dist}(0_{\bar{\theta}_k}, \partial f(\bar{\theta}_k))\leq 2\kappa K d_{\mathcal{R}}(\bar{\theta}_k, \theta_{k-1}).
\end{equation*}
Combining the above inequality with \eqref{T-1}, one has
\begin{align}\label{T-2}
\begin{split}
    {\rm dist}^2(0_{\bar{\theta}_k}, \partial f(\bar{\theta}_k))&\leq4\kappa^2K^2d_{\mathcal{R}}^2(\bar{\theta}_k, \theta_{k-1})\\
    &\leq8\kappa_{\max}K^2\big(f(\theta_{k-1}-f(\theta_k)\big).
\end{split}
\end{align}
Summing $j=1$ to $N,$ we can conclude
\begin{align*}
        \min_{j=1, ..., N}\Big\{{\rm dist}^2\big(0_{\bar{\theta}_j}, \partial f(\bar{\theta}_j)\big)\Big\}&\leq\frac{8\kappa_{\max}K^2}{N}\sum^N_{j=1}\big(f(\theta_{j-1})-f(\theta_j)\big)\\
        &\leq\frac{8\kappa_{\max}K^2}{N}\big(f(\theta_0)-f^{*}\big).
\end{align*}

Fix an $v_k\in\partial h_{\kappa}(\Tilde{\theta}_k, \vartheta_k).$ Since the function $f$ is $\kappa_{cvx}(K_1^4K_2^4-R_1)$-strongly convex, the function $h_{\kappa_{cvx}}$ is $\kappa_{cvx} K_1^4K_2^4$-strongly convex.
\begin{align*}
    f(\theta)&+\frac{\kappa_{cvx}}{2}d_{\mathcal{R}}^2(\theta, \vartheta_k)\\
   & \geq f(\Tilde{\theta}_k)+\frac{\kappa_{cvx}}{2}d_{\mathcal{R}}^2(\Tilde{\theta}_k, \vartheta_k)\\
  & \quad\quad+\frac{\kappa_{cvx} K_1^4K_2^4}{2}d_{\mathcal{R}}^2(\Tilde{\theta}_k, \theta)+\langle v_k, \mathcal{R}_{\Tilde{\theta}_k}^{-1}\theta\rangle.
\end{align*}
Then 
\begin{align*}
    f(\Tilde{\theta}_k) &\leq 
    f(\theta)+\frac{\kappa_{cvx}}{2}\big(d_{\mathcal{R}}^2(\theta, \vartheta_k)-K_1^4K_2^4 d_{\mathcal{R}}^2(\Tilde{\theta}_k, \theta)\\
   &\quad\quad -d_{\mathcal{R}}^2(\Tilde{\theta}_k, \vartheta_k)\big)-\langle v_k, \mathcal{R}_{\Tilde{\theta}_k}^{-1}\theta\rangle.
\end{align*}
So for any $\theta\in\mathcal{M}$
\begin{equation*}
\begin{split} 
    f(\theta_k)\leq& f(\Tilde{\theta}_k)\\
    \leq&  f(\theta)+\frac{\kappa_{cvx}}{2}\Big(K_1^2\|\mathcal{R}^{-1}_{\theta_{k-1}}\theta-\mathcal{R}^{-1}_{\theta_{k-1}}\vartheta_{k}\|^2- K_1^4K_2^2\|\mathcal{R}^{-1}_{\theta_{k-1}}\Tilde{\theta}_k\\
    &-\mathcal{R}^{-1}_{\theta_{k-1}}\theta\|^2
    \Big)-\frac{\kappa_{cvx}}{2}d_{\mathcal{R}}^2(\Tilde{\theta}_k, \vartheta_k)-\langle v_k, \mathcal{R}_{\Tilde{\theta}_k}^{-1}\theta\rangle.
   \end{split}
\end{equation*}
We substitute $\theta=\mathcal{R}_{\theta_{k-1}}\alpha_k\mathcal{R}^{-1}_{\theta_{k-1}}\theta^{*},$ where $\theta^{*}$ is any minimizer of $f.$ Using convexity of $f$
\begin{equation*}
    f(x)\leq\alpha_k f(\theta^{*})+(1-\alpha_k)f(\theta_k),
\end{equation*}
the stopping criteria ,
\begin{equation}
    \label{C2}
    {\rm dist}\big(0_{\Tilde{\theta}_k}, \partial h_{\kappa_{cvx}}(\Tilde{\theta}_k, \vartheta_k)\big)<\frac{\kappa_{cvx}}{k+1}d_{\mathcal{R}}(\Tilde{\theta}_k, \vartheta_k),
\end{equation}
  i.e. $\|v_k\|<\frac{\kappa_{cvx}}{k+1}d_{\mathcal{R}}(\Tilde{\theta}_k, \vartheta_k),$ and $\vartheta_k=\mathcal{R}_{\theta_{k-1}}\alpha_{k}\mathcal{R}^{-1}_{\theta_{k-1}}\Tilde{\vartheta}_{k-1},$ and $\Tilde{\vartheta}_k=\mathcal{R}_{\theta_{k-1}}\frac{1}{\alpha_k}\mathcal{R}^{-1}_{\theta_{k-1}}\Tilde{\theta}_k,$ one has 
  \begin{equation*}
    \begin{split}
        f(\theta_k)\leq&\alpha_k f(\theta^{*})+(1-\alpha_k)f(\theta_k)\\
        &+\frac{\kappa_{cvx}\alpha_k^2}{2}\Big(K_1^2\|\mathcal{R}^{-1}_{\theta_{k-1}}\theta^*-\mathcal{R}^{-1}_{\theta_{k-1}}\Tilde{\vartheta}_{k-1}\|^2\\
        &-K_1^4K_2^2\|\mathcal{R}^{-1}_{\theta_{k-1}}\Tilde{\vartheta}_k-\mathcal{R}^{-1}_{\theta_{k-1}}\theta^*\|^2\Big)\\
        &-\frac{\kappa_{cvx}}{2}d_{\mathcal{R}}^2(\Tilde{\theta}_k, \vartheta_k)+\frac{\kappa_{cvx}}{k+1}d_{\mathcal{R}}(\Tilde{\theta}_k, \vartheta_k)  \|\mathcal{R}_{\tilde{\theta}_k}^{-1}\theta\|\\
  \leq&\alpha_k f(\theta^{*})+(1-\alpha_k)f(\theta_k)\\
        &+\frac{\kappa_{cvx}\alpha_k^2}{2}\Big(K_1^2K_2^2d_{\mathcal{R}}^2(\theta^*, \Tilde{\vartheta}_{k-1})-K_1^2K_2^2d_{\mathcal{R}}^2(\theta^{*}, \Tilde{\vartheta}_k)\Big)\\
        &-\frac{\kappa_{cvx}}{2}d_{\mathcal{R}}^2(\Tilde{\theta}_k, \vartheta_k)+\frac{\kappa_{cvx}}{k+1}d_{\mathcal{R}}(\Tilde{\theta}_k, \vartheta_k)d_{\mathcal{R}}(\tilde{\theta}_k, \theta)\\
  \leq&\alpha_k f(\theta^{*})+(1-\alpha_k)f(\theta_k)\\
        &+\frac{\kappa_{cvx}\alpha_k^2}{2}\Big(K_1^2K_2^2d_{\mathcal{R}}^2(\theta^*, \Tilde{\vartheta}_{k-1})-K_1^2K_2^2d_{\mathcal{R}}^2(\theta^{*}, \Tilde{\vartheta}_k)\Big)\\
        &-\frac{\kappa_{cvx}}{2}d_{\mathcal{R}}^2(\Tilde{\theta}_k, \vartheta_k)\\
        &+\frac{\kappa_{cvx}K_1}{k+1}d_{\mathcal{R}}(\Tilde{\theta}_k, \vartheta_k)\|\mathcal{R}_{\theta_{k-1}}^{-1}\tilde{\theta}_k-\mathcal{R}_{\theta_{k-1}}^{-1}\theta\|\\
     =&\alpha_k f(\theta^{*})+(1-\alpha_k)f(\theta_k)\\
       &+\frac{\kappa_{cvx}\alpha_k^2}{2}\Big(K_1^2K_2^2d_{\mathcal{R}}^2(\theta^*, \Tilde{\vartheta}_{k-1})-K_1^2K_2^2d_{\mathcal{R}}^2(\theta^{*}, \Tilde{\vartheta}_k)\Big)\\
       &-\frac{\kappa_{cvx}}{2}d_{\mathcal{R}}^2(\Tilde{\theta}_k, \vartheta_k)\\
       &+\frac{\kappa_{cvx}\alpha_kK_1}{k+1}d_{\mathcal{R}}(\Tilde{\theta}_k, \vartheta_k)\|\mathcal{R}_{\theta_{k-1}}^{-1}\tilde{\vartheta}_k-\mathcal{R}_{\theta_{k-1}}^{-1}\theta^{*}\|\\
 \leq&\alpha_k f(\theta^{*})+(1-\alpha_k)f(\theta_k)\\
       &+\frac{\kappa_{cvx}\alpha_k^2}{2}\Big(K_1^2K_2^2d_{\mathcal{R}}^2(\theta^*, \Tilde{\vartheta}_{k-1})-K_1^2K_2^2d_{\mathcal{R}}^2(\theta^{*}, \Tilde{\vartheta}_k)\Big)\\
       &-\frac{\kappa_{cvx}}{2}d_{\mathcal{R}}^2(\Tilde{\theta}_k, \vartheta_k)\\
       &+\frac{\kappa_{cvx}\alpha_kK_1K_2}{k+1}d_{\mathcal{R}}(\Tilde{\theta}_k, \vartheta_k)d_{\mathcal{R}}(\tilde{\vartheta}_k, \theta^{*}).
    \end{split}
\end{equation*}
So 
\begin{equation}\label{T-3}
\begin{split}
    f(\theta_k)\leq&\alpha_k f(\theta^{*})+(1-\alpha_k)f(\theta_k)\\
                          &+\frac{\kappa_{cvx}\alpha_k^2}{2}\Big(K_1^2K_2^2d_{\mathcal{R}}^2(\theta^*, \Tilde{\vartheta}_{k-1})-K_1^2K_2^2d_{\mathcal{R}}^2(\theta^{*}, \Tilde{\vartheta}_k)\Big)\\
                          &-\frac{\kappa_{cvx}}{2}d_{\mathcal{R}}^2(\Tilde{\theta}_k, \vartheta_k)\\
                          &+\frac{\kappa_{cvx}\alpha_kK_1K_2}{k+1}d_{\mathcal{R}}(\Tilde{\theta}_k, \vartheta_k)d_{\mathcal{R}}(\theta^{*}, \Tilde{\vartheta}_k).
\end{split}
\end{equation}
Set $\mu_k=\frac{1}{k+1}.$ Completing the square yields
\begin{align*}
    -\frac{\kappa_{cvx}}{2}d_{\mathcal{R}}^2(\Tilde{\theta}_k, \vartheta_k)+\kappa_{cvx}\alpha_k\mu_kK_1K_2 d_{\mathcal{R}}(\Tilde{\theta}_k, \vartheta_k)d_{\mathcal{R}}(\theta^{*}, \Tilde{\vartheta}_k)\\
    \leq
    \frac{K_1^2K_2^2\kappa_{cvx}\alpha_k^2\mu_k^2}{2}d_{\mathcal{R}}^2(\theta^{*}, \Tilde{\vartheta}_k),
\end{align*}
and subtracting $f^{*}=f(\theta^{*})$ from both sides, we obtain
\begin{equation*}
\begin{split} 
    f(\theta_k)-f^{*}&\leq(1-\alpha_k)(f(\theta_{k-1})-f^{*})+\frac{\kappa_{cvx}\alpha_k^2}{2}\Big(K_1^2K_2^2d_{\mathcal{R}}^2(\theta^*, \Tilde{\vartheta}_{k-1})\\
    &\quad-K_1^2K_2^2d_{\mathcal{R}}^2(\theta^{*}, \Tilde{\vartheta}_k)\Big)+\frac{K_1^2K_2^2\kappa_{cvx}\alpha_k^2\mu_k^2}{2}d_{\mathcal{R}}^2(\theta^{*}, \Tilde{\vartheta}_k)\\
    &=(1-\alpha)(f(\theta_{k-1})-f^{*})+\frac{\kappa_{cvx}\alpha_k^2K_1^2K_2^2}{2}d_{\mathcal{R}}^2(\theta^*, \Tilde{\vartheta}_{k-1})\\
    &\quad-\frac{\kappa_{cvx}\alpha_k^2K_1^2K_2^2}{2}(1-\mu_k^2)d_{\mathcal{R}}^2(\theta^{*}, \Tilde{\vartheta}_k).
\end{split}
\end{equation*}
So one can obtain
\begin{align*}
    \frac{f(\theta_k)-f^{*}}{\alpha_k^2}+\frac{\kappa_{cvx} K_1^2K_2^2}{2}(1-\mu_k^2)d_{\mathcal{R}}^2(\theta^{*}, \Tilde{\vartheta}_k)\\
    \leq\frac{1-\alpha_k}{\alpha_k^2}(f(\theta_{k-1})-f^{*})+\frac{\kappa_{cvx} K_1^2K_2^2}{2}d_{\mathcal{R}}^2(\theta^*, \Tilde{\vartheta}_{k-1}).
\end{align*}
Denote $A_k=(1-\mu_k^2).$ Using the equality $\frac{1-\alpha_k}{\alpha_k^2}=\frac{1}{\alpha_{k-1}^2}$ we derive the following recursion 
\begin{equation*}
    \begin{split}
         &\frac{f(\theta_k)-f^{*}}{\alpha_k^2}+\frac{\kappa_{cvx} K_1^2K_2^2 A_k}{2}d_{\mathcal{R}}^2(\theta^{*}, \Tilde{\vartheta}_k) \\
         &\leq\frac{1-\alpha_k}{\alpha_k^2}(f(\theta_{k-1})-f^{*})+\frac{\kappa_{cvx} K_1^2K_2^2}{2}d_{\mathcal{R}}^2(\theta^*, \Tilde{\vartheta}_{k-1})\\
         &=\frac{f(\theta_{k-1})-f^{*}}{\alpha_{k-1}^2}+\frac{\kappa_{cvx} K_1^2K_2^2}{2}d_{\mathcal{R}}^2(\theta^{*}, \Tilde{\vartheta}_{k-1})\\
         &\leq\frac{f(\theta_{k-1})-f^{*}}{A_{k-1}\alpha_{k-1}^2}+\frac{\kappa_{cvx} K_1^2K_2^2}{2}d_{\mathcal{R}}^2(\theta^{*}, \Tilde{\vartheta}_{k-1})\\ 
         &=\frac{1}{A_{k-1}}\Bigg(\frac{f(\theta_{k-1})-f^{*}}{\alpha_{k-1}^2}+\frac{\kappa_{cvx} K_1^2K_2^2A_{k-1}}{2}d_{\mathcal{R}}^2(\theta^{*}, \Tilde{\vartheta}_{k-1})\Bigg).
    \end{split}
\end{equation*}
The last inequality holds because $0<A_k\leq1.$
 Iterating $N$  times, we deduce
\begin{align*}
    \frac{f(\theta_N)-f^{*}}{\alpha_N^2}&\leq\frac{f(\theta_N)-f^{*}}{\alpha_N^2}+\frac{\kappa_{cvx} K_1^2K_2^2 A_k}{2}d_{\mathcal{R}}^2(\theta^{*}, \Tilde{\vartheta}_k)\\
    &\leq\frac{\kappa_{cvx} K_1^2K_2^2}{2}d_{\mathcal{R}}^2(\theta^{*}, \theta_0)\prod^N_{k=2}\frac{1}{A_{k-1}}.
\end{align*}
Note that
\begin{equation*}
    \prod^{N}_{k=2}\frac{1}{A_{k-1}}\leq2;
\end{equation*}
thereby with inequality from Lemma 1 we conclude
\begin{align*}
       f(\theta_N)-f^{*}&\leq\frac{\alpha_N^2\kappa_{cvx} K_1^2K_2^{2}}{2}d_{\mathcal{R}}^2(\theta^{*}, \theta_0)\prod^N_{k=2}\frac{1}{A_{k-1}}\\ 
       &\leq\alpha_N^2\kappa_{cvx} K_1^2K_2^2 d_{\mathcal{R}}^2(\theta^{*}, \theta_0)\\
       &\leq\frac{4\kappa_{cvx} K_1^2K_2^2}{(N+1)^2}d_{\mathcal{R}}^2(\theta^{*}, \theta_0).
\end{align*}
Hence
\begin{equation*}
   f(\theta_N)-f^{*}\leq\frac{4\kappa_{cvx} K_1^2K_2^2}{(N+1)^2}d_{\mathcal{R}}^2(\theta^{*}, \theta_0).
\end{equation*}

\end{proof}

\subsection{Strong convexity of the objective function in estimating the intrinsic Fr\'echet means on the sphere}

We provide a proof that the objective functions in estimating both the intrinsic and extrinsic Fr\'echet means on the sphere in Section 4 is strongly convex. 

\begin{proof}
In order to prove the strong-convexity of the intrinsic mean on the sphere $S^n$, we will prove the strong-convexity of the square intrinsic distance function from the point $x_0\in S^n$
\begin{equation*}
    d_g^2(x_0, x)=\arccos^2(x_0^Tx).
\end{equation*}
So for the geodesic from the point $x_1\in S^n$ to the point $x_2\in S^2$
\begin{equation*}
\begin{split}
    \gamma(\lambda)&=\exp_{x_1}\lambda\log_{x_1}x_2\\
    &=\cos\big(\lambda\arccos(x_1^Tx_2)\big)x_1\\
    &\;\;+\sin\big(\lambda\arccos(x_1^Tx_2)\big)\frac{x_2-(x_1^Tx_2)x_1}{\sqrt{1-(x_1^Tx_2)^2}},
\end{split}
\end{equation*}
we need to show following inequality
\begin{align*}
    d_g^2(x_0, \gamma(\lambda))&\leq(1-\lambda)d_g^2(x_0, x_1)+\lambda d_g^2(x_0, x_2)\\
    &\;\;-\frac{\lambda(1-\lambda)\mu}{2}d_g^2(x_1, x_2).
\end{align*}
For the sake of briefness let's use the following notations
\begin{gather*}
d_1=\arccos(x_0^Tx_1), \qquad d_2=\arccos(x_0^Tx_2), \\
d_2=\arccos(x_1^Tx_2).
\end{gather*}

Therefore we have to prove the following inequality
\begin{multline}
\label{strong-con}
    \arccos^2\bigg(\cos(\lambda d_3)\cos(d_1)\\
    +\sin(\lambda d_3)\frac{\cos(d_2)-\cos(d_3)\cos(d_1)}{\sin(d_3)}\bigg)\\
    \leq\quad(1-\lambda)d_1^2+\lambda d_2^2-\frac{\lambda(1-\lambda)\mu}{2}d_3^2.
\end{multline}

Or we should prove the inequality
\begin{align*}
    \arccos^2(x_0^Tx_2)&>\arccos^2(x_0^Tx_1)-2(\log_{x_1}x_0)^T\log_{x_1}x_2\\
    &\;\;+\frac{\mu}{2}\arccos^2(x_1^Tx_2)
\end{align*}
    \begin{align*}
    d_2^2&>d_1^2-2\bigg(d_1\frac{x_0-\cos(d_1)x_1}{\sin(d_1)}\bigg)^T\bigg(d_3\frac{x_2-\cos(d_3)x_1}{\sin(d_3)}\bigg)+\frac{\mu}{2}d_3^2\\
    &=d_1^2-2d_1d_3\frac{\cos(d_2)-\cos(d_3)\cos(d_1)}{\sin(d_1)\sin(d_3)}+\frac{\mu}{2}d_3^2
\end{align*}
The last inequality was checked to hold in Wolfram Mathematica for $d_1,d_2\in[0, \pi/4]$ and $d_3\in\Big[|d_1-d_2|, d_1+d_2\Big],$ where $\mu=1$.

In order to proof the strong-convexity of Fr\'echet function in estimating extrinsic mean on the sphere $S^n$, we will prove the strong-convexity of the square extrinsic distance function from the point $x_0\in S^n$
\begin{equation*}
    d_e^2(x_0, x)=2(1-x_0^Tx).
\end{equation*}
So for the geodesic from the point $x_1\in S^n$ to the point $x_2\in S^2$
\begin{equation*}
\begin{split}
    \gamma(\lambda)&=\exp_{x_1}\lambda\log_{x_1}x_2\\
    &=\cos\big(\lambda\arccos(x_1^Tx_2)\big)x_1+\sin\big(\lambda\arccos(x_1^Tx_2)\big)\frac{x_2-(x_1^Tx_2)x_1}{\sqrt{1-(x_1^Tx_2)^2}},
\end{split}
\end{equation*}
we need to show that
\begin{align*}
    d_e^2(x_0, \gamma(\lambda))&\leq(1-\lambda)d_e^2(x_0, x_1)+\lambda d_e^2(x_0, x_2)\\
    &\;\;-\frac{\lambda(1-\lambda)\mu}{2}d_g^2(x_1, x_2).
\end{align*}

Therefore we have to prove 
\begin{multline}\label{strong-con}
    2\bigg(1-\cos(\lambda d_3)\cos(d_1)-\sin(\lambda d_3)\frac{\cos(d_2)-\cos(d_3)\cos(d_1)}{\sin(d_3)}\bigg)\\
    \leq\quad2-2\big((1-\lambda)\cos(d_1)+\lambda\cos( d_2)\big)-\frac{\lambda(1-\lambda)\mu}{2}d_3^2.
\end{multline}

Or we need to show
\begin{multline*}
    2(1-x_0^Tx_2)>\\
    2(1-x_0^Tx_1)-2\big(x_0-(x_0^Tx_1)x_1\big)^T\log_{x_1}x_2+\frac{\mu}{2}\arccos^2(x_1^Tx_2)
\end{multline*}
Thus
    \begin{align*}
    &2\big(1-\cos(d_2)\big)\\
    &\quad\;>2\big(1-\cos(d_1)\big)- 2\big(x_0-\cos(d_1)x_1\big)^T\bigg(d_3\frac{x_2-\cos(d_3)x_1}{\sin(d_3)}\bigg)+\frac{\mu}{2}d_3^2\\
    &\quad=2\big(1-\cos(d_1)\big)-2d_3\frac{\cos(d_2)-\cos(d_3)\cos(d_1)}{\sin(d_3)}+\frac{\mu}{2}d_3^2
\end{align*}
The last inequality was verified Wolfram Mathematica for $d_1,d_2\in[0, \pi/4]$ and $d_3\in\Big[|d_1-d_2|, d_1+d_2\Big],$ where $\mu=1$

\end{proof}

\subsubsection*{Acknowledgments}
\vspace{-.5em}
Lizhen Lin would like to thank Dong Quan Nguyen for very helpful discussions.  Lizhen Lin acknowledges the support from NSF grants IIS  1663870, DMS Career 1654579 and a DARPA grant N66001-17-1-4041.   Bayan Saparbayeva was partially supported by DARPA N66001-17-1-4041. 

\clearpage
\bibliographystyle{apalike}
\bibliography{sample}

\end{document}